\begin{document}

\title{Semantic Soundness for Language Interoperability}         


\author{Daniel Patterson}
\affiliation{
  \institution{Northeastern University}            
  \streetaddress{440 Huntington Avenue}
  \city{Boston}
  \state{MA}
  \postcode{02115}
  \country{USA}
}
\email{dbp@dbpmail.net}          

\author{Noble Mushtak}
\affiliation{
  \institution{Northeastern University}            
  \streetaddress{440 Huntington Avenue}
  \city{Boston}
  \state{MA}
  \postcode{02115}
  \country{USA}
}
\email{mushtak.n@northeastern.edu}          

\author{Andrew Wagner}
\affiliation{
  \institution{Northeastern University}            
  \streetaddress{440 Huntington Avenue}
  \city{Boston}
  \state{MA}
  \postcode{02115}
  \country{USA}
}
\email{ahwagner@ccs.neu.edu}          

\author{Amal Ahmed}
\affiliation{
  \institution{Northeastern University}            
  \streetaddress{440 Huntington Avenue}
  \city{Boston}
  \state{MA}
  \postcode{02115}
  \country{USA}
}
\email{amal@ccs.neu.edu}          

\makeatletter
\g@addto@macro \small {%
  \setlength\abovedisplayskip{1pt}%
  \setlength\belowdisplayskip{1pt}%
  \setlength\abovedisplayshortskip{1pt}%
  \setlength\belowdisplayshortskip{1pt}%
  \setlength\abovecaptionskip{10pt}%
  \setlength\belowcaptionskip{2pt}%
  \setlength{\textfloatsep}{5pt}
  \setlength{\floatsep}{5pt}
}
\makeatother

\mprset {sep=1.2em}

\newcommand{\dbp}[1]{{\color{Red}(DBP: #1)}}
\newcommand{\aja}[1]{{\color{Red}(AA: #1)}}
\newcommand{\noble}[1]{{\color{Red}(NM: #1)}}
\newcommand{\ahw}[1]{{\color{Red}(AHW: #1)}}

\newcommand{\para}[1]{\vspace{0.1cm}\emph{#1}}

\newcommand{\vo}{\mathrel{\color{black}{\vert}}}
\newcommand{\mtt}[1]{\mathtt{#1}}
\newcommand{\mbt}[1]{\mathbf{\mathtt{#1}}}

\newcommand{\close}{\text{close}}
\newcommand{\closefirst}{\text{closeaffine}}

\newcommand{\gcmov}{\tlang{gcmov}}
\newcommand{\thnk}{\text{thunk}}
\newcommand{\grd}{\text{guard}}
\newcommand{\prtct}{\text{protect}}
\newcommand{\fl}{\mathit{FL}}
\newcommand\mpart[1]{{#1} : MHeap}
\newcommand\gpart[1]{{#1} : GCHeap}
\newcommand{\rchgclocs}{\text{rchgclocs}}
\newcommand{\rl}{\text{reachablelocs}}
\newcommand{\rf}{\text{flags}}
\newcommand{\ff}{\text{freeflags}}
\newcommand{\fv}{\text{freevars}}
\newcommand{\cod}{\text{cod}}

\newcommand{\callgc}{\tlang{callgc}}

\newcommand\affwf[2]{\tlang{#1} \vdash^A \tlang{#2}}
\newcommand\afffn[1]{\vdash^\lambda \tlang{#1}}

\newcommand{\aused}{\textsc{used}}
\newcommand{\aunused}{\textsc{unused}}

\newcommand\scale[2]{\vstretch{#1}{\hstretch{#1}{#2}}}
\newcommand\envplus{\ensurestackMath{\stackinset{c}{0ex}{c}{0.2ex}{\scale{.6}{\scriptscriptstyle +}}{\Cup}}}

\newcommand{\W}{\mathit{W}}
\newcommand{\heapty}{\Psi}
\newcommand{\codety}{\raisebox{2pt}{$\chi$}}
\newcommand{\bij}{\mathtt{bij}}
\newcommand{\byte}{\mathtt{byte}}
\newcommand{\loc}{\ell}
\newcommand{\flag}{\mathit{f}}
\newcommand{\Vrel}[2]{\mathcal{V}\llbracket #1 \rrbracket_{#2}}
\newcommand{\HVrel}[2]{\mathcal{HV}\llbracket #1 \rrbracket_{#2}}
\newcommand{\CVrel}[2]{\mathcal{CV}\llbracket #1 \rrbracket_{#2}}
\newcommand{\Erel}[2]{\mathcal{E}\llbracket #1 \rrbracket_{#2}}
\newcommand{\Krel}[2]{\mathcal{K}\llbracket #1 \rrbracket_{#2}}
\newcommand{\Rrel}[2]{\mathcal{R}\llbracket #1 \rrbracket_{#2}}
\newcommand{\obs}{\mathcal{O}}
\newcommand{\Envrel}[2]{\mathcal{G}\llbracket #1 \rrbracket_{#2}}
\newcommand{\Tyenvrel}[1]{\mathcal{D}\llbracket #1 \rrbracket}
\newcommand{\worldext}{\sqsubseteq}
\newcommand{\worldextstrict}{\sqsubset}
\newcommand{\floor}[2]{\lfloor #2 \rfloor_{#1}}
\newcommand{\dom}[1]{\text{dom}(#1)}
\newcommand{\heap}{\mathsf{H}}
\newcommand{\locsets}{\mathbb{L}}
\newcommand{\locsett}{\mathit{L}}
\newcommand{\locbij}{\mathit{\eta}}
\newcommand{\codeheap}{\mathsf{C}}
\newcommand{\stack}{\gamma}
\newcommand{\nostack}{\emptyset}
\newcommand{\astore}{\Theta}
\newcommand{\fstore}{\Phi}
\newcommand{\cont}{\mathsf{K}}
\newcommand{\Stack}{\Theta}
\newcommand{\steps}[1]{\overset{#1}{\rightarrow}}
\newcommand{\irred}[1]{\text{irred}(#1)}
\newcommand{\err}[1]{\text{err}(#1)}
\newcommand{\closure}[1]{\langle #1 \rangle}
\newcommand{\qual}{\mathfrak{q}}
\newcommand{\unrestricted}{\tlang{U}}
\newcommand{\affine}{\tlang{A}}

\newcommand{\sem}[1]{\llbracket #1 \rrbracket}

\newcommand{\rcolor}[1]{{\color{RoyalBlue} #1}}
\newcommand{\rlang}[1]{\mbox{\unboldmath$\mathtt{\rcolor{#1}}$}}
\newcommand{\langref}{\rlang{MiniML}\xspace}
\newcommand{\ccolor}[1]{{\color{Magenta} #1}}
\newcommand{\clang}[1]{\mathtt{\ccolor{#1}}}
\newcommand{\langcast}{\clang{Cast}\xspace}
\newcommand{\tcolor}[1]{{\color{Black} #1}}
\newcommand{\tlang}[1]{\mathsf{\tcolor{#1}}}
\newcommand{\gcolor}[1]{{\color{Gray} #1}}
\newcommand{\glang}[1]{\mathsf{\gcolor{#1}}}

\newcommand{\langtarget}{\tlang{LCVM}\xspace}
\newcommand{\langlltarget}{\tlang{Mem}\xspace}
\newcommand{\acolor}[1]{{\color{orange}{#1}}}
\newcommand{\alang}[1]{\bm{\mathrm{\acolor{#1}}}}
\newcommand{\langatlc}{\alang{\textsc{\textbf{Affi}}}\xspace}
\newcommand{\Affenv}{\Omega}

\newcommand{\rhlcolor}[1]{{\color{Turquoise} #1}}
\newcommand{\rhlang}[1]{\mathtt{\rhlcolor{#1}}}
\newcommand{\langrhl}{\rhlang{RefHL}\xspace}
\newcommand{\rllcolor}[1]{{\color{BurntOrange} #1}}
\newcommand{\rllang}[1]{\bm{\mathrm{\rllcolor{#1}}}}
\newcommand{\langrll}{\rllang{RefLL}\xspace}

\newcommand{\langstack}{\tlang{StackLang}\xspace}

\newcommand{\taub}{\bm\tau}

\newcommand{\tconf}[2]{\langle \tcolor{#1},\tcolor{#2} \rangle}
\newcommand{\sconf}[3]{\langle \tcolor{#1};\tcolor{#2};\tcolor{#3} \rangle}
\newcommand{\phconf}[3]{\langle \tcolor{#1},\tcolor{#2},\tcolor{#3} \rangle}

\newcommand{\mconf}[5]{\langle \tcolor{#1},\tcolor{#2},\tcolor{#3},\tcolor{#4},\tcolor{#5} \rangle}

\newcommand{\step}{\rightarrow}
\newcommand{\multistep}{\overset{*}{\step}}
\newcommand{\progstep}{\rightarrow}
\newcommand{\primstep}{\Mapsto}
\newcommand{\kontstep}{\overset{ctx}{\step}}
\newcommand{\gcstep}{\overset{gc}{\step}}
\newcommand{\multigcstep}{\overset{*}{\gcstep}}
\newcommand{\kontstuck}{\overset{ctx}{\nrightarrow}}

\newcommand{\phstep}{\dashrightarrow}
\newcommand{\phsteps}[1]{\overset{#1}{\phstep}}
\newcommand{\phmultistep}{\overset{*}{\phstep}}

\newcommand{\done}{\ddagger}
\newcommand{\affref}{\mathfrak{a}}
\newcommand{\used}{\boxast}

\newcommand{\restore}[3]{restore~{#1}{\mapsto}{#2}~{#3}}

\newcommand{\convsub}{\leq}
\newcommand{\conveq}{\cong}
\newcommand{\conv}{\sim}

\newcommand{\lconvert}{\llparenthesis}
\newcommand{\rconvert}{\rrparenthesis}
\newcommand{\convert}[2]{\lconvert #2 \rconvert_{#1}}

\newcommand{\aprx}{\preceq}

\newcommand{\context}{\mathfrak{C}}
\newcommand{\produces}{\,{\rightsquigarrow}\,}

\newcommand{\term}{\Downarrow}
\newcommand{\running}{\text{running}}

\newcommand{\later}{{\vartriangleright}}

\newcommand{\dyn}{\circ}
\newcommand{\stat}{\bullet}
\newcommand{\modal}{{\scaleobj{0.55}{{\LEFTcircle}}}}
\newcommand{\modalolli}{\rlap{\text{---}}\hspace{0.5em}\raisebox{0.12em}{\modal~}}
\newcommand{\statlolli}{\multimapdot}
\newcommand{\dynlolli}{\multimap}

\newcommand{\fcolor}[1]{\rcolor{#1}}
\newcommand{\flang}[1]{\rlang{#1}}
\newcommand{\langsysf}{\langref}
\newcommand{\lcolor}[1]{{\color{Magenta}{#1}}}
\newcommand{\llang}[1]{\bm{\mathrm{\lcolor{#1}}}}
\newcommand{\langlt}{\llang{\textsc{\textbf{L\textsuperscript{3}}}}\xspace}
\newcommand{\foreign}[1]{\left\langle #1 \right\rangle}
\newcommand{\foreignset}{\textsc{Duplicable}}
\newcommand{\Ptr}[1]{ptr\,#1}
\newcommand{\Capa}[2]{cap\,#1\,#2}
\newcommand{\ptr}[1]{ptr\,#1}
\newcommand{\capa}{cap}
\newcommand{\pack}[1]{\left\ulcorner #1 \right\urcorner}
\newcommand{\atf}[1]{#1 . \flang{F}}
\newcommand{\atl}[1]{#1 . \llang{L3}}
\newcommand{\letc}[3]{let~#1 = #2~in~#3}

\newcommand{\sstack}{\mathsf{S}}

\definecolor{hlgray}{gray}{0.9}
\newcommand{\hl}[1]{\colorbox{hlgray}{#1}}

\newcommand{\lts}{\gamma_{\text{locs}}} 
\newcommand{\fsub}{\gamma_{\fcolor{\Gamma}}}
\newcommand{\lsub}{\gamma_{\llang{L}}}
\newcommand{\lgammasub}{\gamma_{\llang{L}}.\llang{\Gamma}}
\newcommand{\ldeltasub}{\gamma_{\llang{L}}.\llang{\Delta}}

\newcommand{\fail}[1]{\tlang{fail~#1}}
\newcommand{\tyerrcode}{\textsc{Type}}
\newcommand{\tyfail}{\fail{\tyerrcode}}
\newcommand{\converrcode}{\textsc{Conv}}
\newcommand{\convfail}{\fail{\converrcode}}
\newcommand{\idxerrcode}{\textsc{Idx}}
\newcommand{\idxfail}{\fail{\idxerrcode}}
\newcommand{\refokerr}{\textsc{OkErr}}
\newcommand{\inlokerr}{\{\converrcode, \idxerrcode\}}
\newcommand{\ptrerrcode}{\textsc{Ptr}}
\newcommand{\ptrfail}{\fail{\ptrerrcode}}

\newcommand{\Typ}{\ensuremath{\mathit{Typ}}}
\newcommand{\UnrTyp}{\ensuremath{\mathit{UnrTyp}}}

\newcommand{\gcmaps}{\overset{gc}{\mapsto}}
\newcommand{\manmaps}{\overset{m}{\mapsto}}

\newcommand{\sqSubset}{~\hbox{$\sqsubset$}\kern -4.5pt\hbox{$\mathbin{\vcenter{\hbox{\scaleobj{0.50}{\bullet}}}}$~}}
\newcommand{\sqSubseteq}{~\hbox{$\sqsubseteq$}\kern -4pt\raisebox{0.5pt}{$\mathbin{\vcenter{\hbox{\scaleobj{0.50}{\bullet}}}}$~}}
\newcommand{\worldextstricte}{\sqSubset}
\newcommand{\worldexte}{\sqSubseteq}


\begin{abstract}
  Programs are rarely implemented in a single language, and thus questions of type
soundness should address not only the semantics of a single language, but how it
interacts with others. Even between type-safe languages, disparate features can
frustrate interoperability, as invariants from one language can easily be
violated in the other. In their seminal 2007 paper, \citet{matthews07} proposed
a multi-language construction that augments the interoperating languages with
a pair of \emph{boundaries} that allow code from one language to be embedded in
the other. While this technique has been widely applied, their syntactic source-level
interoperability doesn't reflect practical implementations, where the behavior
of interaction is only defined after compilation to a common target, and any
safety must be ensured by target invariants or inserted target-level ``glue code.''

In this paper, we present a novel framework for the design and verification of
sound language interoperability that follows an interoperation-after-compilation
strategy. Language designers specify what data can be converted between types of
the two languages via a convertibility relation $\tau_A \conv \tau_B$
(``$\tau_A$ is convertible to $\tau_B$'') and specify target-level glue code
implementing the conversions. Then, by giving a semantic model of
source-language types as sets of target-language terms, they can establish not
only the meaning of the source types, but also \emph{soundness of conversions}:
i.e., whenever $\tau_A \conv \tau_B$, the corresponding pair of conversions
(glue code) convert target terms that behave like $\tau_A$ to target terms that
behave like $\tau_B$, and vice versa. With this, they can prove semantic type
soundness for the entire system. We illustrate our framework via a series of
case studies that demonstrate how our semantic interoperation-after-compilation approach
allows us both to account for complex differences in language semantics and make
efficiency trade-offs based on particularities of compilers or targets.

\end{abstract}

\begin{CCSXML}
<ccs2012>
<concept>
<concept_id>10011007.10011006.10011008</concept_id>
<concept_desc>Software and its engineering~General programming languages</concept_desc>
<concept_significance>500</concept_significance>
</concept>
</ccs2012>
\end{CCSXML}

\ccsdesc[500]{Software and its engineering~General programming languages}

\keywords{language interoperability, type soundness, semantics, logical relations}  

\maketitle

\section{Introduction}\label{sec:intro}
All practical language implementations come with some way of interoperating with
code written in a different language, usually via a foreign-function interface
(FFI). This enables development of software systems with components written in
different languages, whether to support legacy libraries or different
programming paradigms. For instance, you might have a system with a
high-performance data layer written in Rust interoperating with business logic
implemented in OCaml. Sometimes, this interoperability is realized by targeting
a common platform (e.g., Scala \cite{odersky2005scalable} and Clojure
\cite{hickey2020history} for the JVM, or SML \cite{benton2004adventures} and F\#
\cite{syme2006leveraging} for .NET). Other times, it is supported by libraries
that insert boilerplate or ``glue code'' to mediate between the two languages
(such as the binding generator SWIG \cite{beazley96}, C->Haskell
\cite{chakravarty99}, OCaml-ctypes \cite{yallop18}, NLFFI \cite{blume01}, Rust's
bindgen \cite{bindgen}, etc). While interoperability can be achieved in other
ways---via the network, inter-process communication, or dispatching between
interpreters and compiled code---we focus in this paper on the case when both
languages are compiled to a shared intermediate or target language.

In 2007, \citet{matthews07} observed that while there were
numerous FFIs that supported interoperation between languages, there had been no
effort to study the \emph{semantics} of interoperability. They proposed a simple and
elegant system for abstractly modeling interactions between languages $A$
and $B$ by embedding the existing operational syntax and semantics into a multi-language
$AB$ and adding boundaries to mediate between the two. Specifically, a boundary
$^{\tau_A}\!\mathcal{AB}^{\tau_B}(\cdot)$ allows a term $\tlang{e_B}$ of type $\tau_B$ to
be embedded in an $A$ context that expects a term of type $\tau_A$, 
and likewise for the boundary $^{\tau_B}\mathcal{BA}^{\tau_A}(\cdot)$.
Operationally, the term
$^{\tau_A}\!\mathcal{AB}^{\tau_B}(\tlang{e_B})$ evaluates $e_B$ using the $B$-language semantics to
$^{\tau_A}\!\mathcal{AB}^{\tau_B}(\tlang{v_B})$ and then a type-directed conversion
takes the value $\tlang{v_B}$ of type $\tau_B$ to an $A$-language term of type $\tau_A$.
There are often interesting design choices
in deciding what conversions are available for a type, if any at all.  
One can then prove that the entire multi-language type system is sound by proving type safety for
the multi-language, which includes the typing rules of both the embedded  
languages and the boundaries. This multi-language framework
has inspired a significant amount of work on 
interoperability: between simple and dependently typed languages~\cite{osera12},
between languages with unrestricted and substructural
types~\cite{tov10,scherer18}, between a high-level functional language and
assembly~\cite{patterson17}, and between source and target languages of
compilers~\cite{ahmed11,perconti14,new16}.  

Unfortunately, while Matthews-Findler-style boundaries give an elegant, abstract
model for interoperability, they do not reflect reality. Indeed, a decade and a
half later, there is little progress on assigning semantics to real
multi-language systems. In the actual implementations we study, the source languages are
compiled to components in a common target and glue code is inserted at the
boundaries between them to account for different data representations or calling
conventions. While one could try to approach this problem by defining
source-level boundaries, building a compiler for the multi-language, and then
showing that the entire system is realized correctly, there are serious
downsides to this approach. One is that if the two languages differ
significantly, the multi-language may be significantly more than just an
embedding of the evaluation rules of both languages (c.f. our last case study,
as an implicitly garbage-collected language interoperating with a manually
managed language may need to make the garbage collection explicit). And that
doesn't even consider the fact that in practice, we usually have \emph{existing}
compiler implementations for one or both languages and wish to add (or extend)
support for interoperability. Here, language designers' understanding of what
datatypes \emph{should} be convertible at the source level very much depends on
how the sources are compiled and how data is (or could be) represented in the
target, all information that is ignored by the multi-language approach.
Moreover, certain conversions, even if possible, might be undesirable because
the glue code needed to realize \emph{safe} interoperability imposes too much
runtime overhead.

In this paper, we present a framework for the \emph{design} and
\emph{verification} of sound language interoperability, where both activities
are connected to the actual implementation (of compilers and conversions). At
the source, we still use Matthews-Findler-style boundaries, as our approach differs
not in the source syntax but rather that instead of proving operational
properties of that source, we instead prove semantic type soundness by defining
a model of \emph{source} types as sets of (or relations on) \emph{target} terms.
That is, the interpretation of a source type is the set of target terms that
\emph{behave as that type}. Guiding the design of these type interpretations are
the compilers.
This kind of model, often called a realizability model, is not a new idea
--- for instance, \citet{benton07:ppdp} and \citet{benton09:tldi} used such
models to prove type soundness, but their work was limited to a single source
language. By interpreting the types of two source languages as sets of terms in
a common target, we enable rich reasoning about interoperability.
Using the model, we can then give meaning to
a boundary $^{\tau_B}\!\mathcal{BA}^{\tau_A}(\cdot)$: 
there is a bit of target code that, when given a target term that is in
the model of the type $\tau_A$, results in a target term in the model of type
$\tau_B$. 

A realizability model is valuable not only for proving soundness,
but for reasoning about the \emph{design} of interoperability.
For example, we can ask if a
particular type in one language is \emph{the same} as a type in the other language. 
This is true if the same set of target terms inhabits both types, and in this case
conversions between the types should do nothing.
More generally, opportunities for efficient conversions may only become apparent
upon looking at how source types and invariants are represented (or realized) in the target.
Since interoperability is a design challenge, with tradeoffs just like any other---performance 
high among them---working with the ability to understand all the
pieces is a tremendous advantage.

\paragraph{Contributions} To demonstrate the use and benefits of our framework,
we present three case studies that illustrate different kinds of
challenges for interoperability. In each case, we compile to an untyped target
language. 
\begin{enumerate}[wide]
  \item \textbf{Shared-Memory Interoperability (\S \ref{sec:ref}):} We consider how mutable references can be
    exchanged between two languages and what properties must hold of stored data
    for aliasing to be safe.  We show that to avoid copying mutable data --- 
    without having to wrap references in guards or
    chaperones~\cite{strickland12} ---
    convertible reference types must be inhabitated by the \emph{very same} set of target terms.
  \item \textbf{Affine \& Unrestricted (\S \ref{sec:affeff}):}
    We consider how $\langref$, a standard functional language with mutable references, can
    interact with $\langatlc$, an affine language.  We
    show that affine code can be safely embedded in unrestricted code and vice versa
    by using runtime checks (only where necessary) to ensure that affine resources are used at most
    once.
  \item \textbf{Memory Management \& Polymorphism (\S \ref{sec:polylin}):}
    We consider how $\langsysf$, whose references are garbage collected,
    can interact with $\langlt$~\cite{ahmed07:L3}, a language that uses linear
    capabilities to support safe strong updates to a manually managed mutable
    heap. We demonstrate not only when memory can be moved between languages, but also
    a type-level form of interoperability that allows generics to be used with
    $\langlt$ (which lacks type polymorphism) without violating any invariants
    of either language.
\end{enumerate}

For each case study, we devise a novel realizability model.  An interesting
aspect of these models is that, since the target languages are untyped,
statically enforced source invariants must be captured using either dynamic
enforcement in target code or via invariants in the model. This demonstrates 
that our approach is viable even when working with existing target languages
without rich static reasoning principles.

We chose these three case studies to exercise our framework both in terms of
type system invariants (substructural types, polymorphism) but also properly
handling mutable state (exchanging pointers and garbage collection). Interesting
challenges for the future could include differences of control-flow and concurrency.

%

\vspace{0.2cm}
\noindent \emph{Definitions and proofs elided from this paper are provided in
  our technical appendix \cite{patterson22:semint-tr}.}


\section{The Framework}\label{sec:framework}

\newcommand{\langA}{\rhlang{A}}
\newcommand{\Alang}[1]{\rhlang{#1}}
\newcommand{\langB}{\rllang{B}}
\newcommand{\Blang}[1]{\rllang{#1}}
\newcommand{\langT}{\tlang{T}}

Before diving into the case studies that serve as evidence of its efficacy, we
first describe, in step-by-step fashion, the framework for proving type
soundness in the presence of interoperability that is the primary contribution
of this paper. The inputs to the framework are two source languages, language
$\langA$ and language $\langB$, a target language $\langT$, and compilers
$\Alang{e}^{+} = \tlang{e}$ and $\Blang{e}^{+} = \tlang{e}$. This section serves
both as a roadmap of what is to come and a reference to return to. The first
two steps (\S\ref{framework:step1} and \S\ref{framework:step2}) must be
performed by the \textbf{designer} of the interoperability system, whereas the
last three (\S\ref{framework:step3}, \S\ref{framework:step4}, and
\S\ref{framework:step5}) should be performed by the \textbf{verifier} of the
system. Note that, as with type soundness, partial verification is still
potentially useful, and so the first two steps should be seen as what needs to
be implemented, and the last three as what should be aspired to, if not formally carried out.

\subsection{Boundary syntax}\label{framework:step1}

To include code from another language, the designer requires some way of invoking such
code. While there are various ways of doing this in real toolchains, here she adopts a
general approach based on a notion of \emph{language
  boundaries}.

If a language $\langA$ is to include code from language $\langB$, the $\langA$
designer should add a boundary form $\Alang{\convert{\tau_{A}}{\Blang{e}}}$.
This allows a term $\Blang{e} : \Blang{\tau_{B}}$ to be used in an $\langA$
context at type $\Alang{\tau_{A}}$, for some $\Alang{\tau_{A}}$ and
$\Blang{\tau_{B}}$. This boundary strategy is very general: it allows both
inline code, a strategy adopted by many FFI libraries for C, but also the more
typical import/export style of linking. In that case, what is compiled would be
an open term with a $\langB$ binding
$\Blang{f} : \Blang{\tau \rightarrow \tau'}$ free. Then, the use of the imported
term would be $\Alang{\convert{\tau_{A} \rightarrow \tau_{A}'}{\Blang{f}}}$ for
appropriate types $\Alang{\tau_{A}}$ and $\Alang{\tau_{A}'}$.

Note that while in our examples, we equip both languages with boundaries, the framework
does not require this.

\subsection{Convertibility rules}\label{framework:step2}

To know whether a term $\Alang{\convert{\tau_{A}}{\Blang{e}}}$ is well-typed, the designer needs to know if
a $\langB$ term $\Blang{e} : \Blang{\tau_{B}}$ can be converted to an $\langA$ type
$\Alang{\tau_{A}}$. There is no way to know, a priori, what types can be converted, and
thus the framework requires that the designer specify this explicitly. In particular, she must provide
judgments of the form $\Alang{\tau_{A}} \conv \Blang{\tau_{B}}$ to indicate that these two types
are interconvertible, allowing for the possibility of dynamic conversion errors.
Since our notion of linking depends upon both language $\langA$ and $\langB$ being
compiled to a common target $\langT$, this conversion needs to be witnessed by $\langT$
code that performs the conversion. $C_{\Alang{\tau_{A}}\mapsto
  \Blang{\tau_{B}}}$ denotes the code that performs a target-level conversion
from $\Alang{\tau_{A}}$ to $\Blang{\tau_{B}}$. For example, if
$\Alang{bool} \conv \Blang{int}$, and the former compiles to the integers
$0$ and $1$, then the conversion $C_{\Alang{bool}\mapsto \Blang{int}}$ is
a no-op (since compiled booleans are already $\langT$ language integers), but
$C_{\Blang{int}\mapsto \Alang{bool}}$ must do something different. It
could raise a dynamic conversion error if given a $\langT$ $\tlang{int}$ other than $0$
or $1$, or it could collapse all other numbers into one of those, or something
else. The particular choice depends on the languages in question, and what the
designer of the interoperability system thinks makes sense: the framework only
requires that the decision made preserves type soundness.

\subsection{Realizability models for both languages}\label{framework:step3}

In order to prove type soundness, and in particular, account for the boundaries
and convertibility rules from \S\ref{framework:step1} and
\S\ref{framework:step2}, the verifier needs to build a logical relation for both languages.
This relation is atypical in two ways.
First, it is a \emph{realizability} model, which means that while it
is indexed by source types, it is inhabited by target terms. That is, the verifier must first
define an interpretation of values for each source type $\Alang{\tau}$, written
$\Vrel{\Alang{\tau}}{}$, as the set of $\langT$ language \emph{values}
$\tlang{v}$ that behave as $\Alang{\tau}$. That is, $\Vrel{\Alang{bool}}{}$ is
not the set of $\langA$ language booleans (i.e., $\Alang{true}$ and
$\Alang{false}$), but rather, the $\langT$ values that behave as $\langA$
booleans (likely, $0$ and $1$). In particular, the compiler from $\langA$ to
$\langT$ must send $\Alang{true}$ and $\Alang{false}$ into $\Vrel{\Alang{bool}}{}$, but the latter can
include more values. There is also an expression relation, written
$\Erel{\Alang{\tau_{A}}}{}$, that is the set of $\langT$ language terms that
evaluate to values in $\Vrel{\Alang{\tau_{A}}}{}$ (or diverge, or run to a
well-defined error). The second atypical, and novel, aspect is that the relation is indexed with
the types of \emph{both} of our source languages; in this example, $\langA$ and
$\langB$. Since they compile to the same target, this works: the
inhabitants of $\Vrel{\Alang{bool}}{}$ and $\Vrel{\Blang{int}}{}$ are both
$\langT$ values. By bringing the types of both languages into a common setting,
the verifier gains powerful reasoning principles; for example, we can ask if
$\Vrel{\Alang{bool}}{}$ = $\Vrel{\Blang{int}}{}$.

\subsection{Soundness of conversions}\label{framework:step4}

Using the realizability models defined in \S\ref{framework:step3}, the verifier can prove
that the convertibility rules defined in \S\ref{framework:step2} are sound. In
particular, if $\Alang{\tau_{A}} \conv \Blang{\tau_{B}}$, then she should show that the
conversions $C_{\Alang{\tau_{A}}\mapsto\Blang{\tau_{B}}}$ and $C_{\Blang{\tau_{B}}\mapsto\Alang{\tau_{A}}}$
actually translate expressions between the types correctly. This is done by
showing for any term $\tlang{e}$ in $\Erel{\Alang{\tau_{A}}}{}$, that 
$C_{\Alang{\tau_{A}}\mapsto\Blang{\tau_{B}}}(\tlang{e})$ is in $\Erel{\Blang{\tau_{B}}}{}$, and similarly for
$C_{\Blang{\tau_{B}}\mapsto\Alang{\tau_{A}}}$. Since the model defines type interpretations,
this ensures that the conversions do exactly what is expected.

\subsection{Soundness of entire languages}\label{framework:step5}

Proving the conversions sound (\S\ref{framework:step4}) is the central goal, of
course, but the verifier also needs to ensure that the model defined in
\S\ref{framework:step3} is actually faithful to the languages. She does this by
following the standard approach for proving semantic type soundness. First, for
each typing rule in both source languages, she proves that a corresponding lemma
holds in terms of the model. For example, for pairs she proves that if
$\tlang{e} \in \Erel{\Alang{\tau_{1} \times \tau_{2}}}{}$ then
$\Alang{fst}^{+}\tlang{~e} \in \Erel{\Alang{\tau_{1}}}{}$---note we write
$\Alang{fst}^{+}$, which is $\langT$ code (and could be an array projection, or some other $\langT$
operation), since what is in $\Erel{\Alang{\tau_{1}}}{}$ are $\langT$ terms.


\section{Shared Memory}\label{sec:ref}

Aliased mutable data is challenging to deal with no matter the context, but
aliasing across languages is especially difficult because giving a pointer to a
foreign language can allow for \emph{unknown} data to be written to its address.
Specifically, if the pointer has a particular type in the host language,
then only certain data should be written to it, 
but the foreign language may not respect or even know about that restriction.
One existing approach to this problem is to create proxies, where data is guarded or
converted before being read or
written~\cite{dimoulas12,strickland12,mates19:refcc}. While sound,
this comes with significant runtime overhead.
Here, our framework allows a different approach.

\paragraph{Languages} In this case study, we explore this problem using two
simply-typed functional source languages with dynamically allocated mutable
references, $\langrhl$ and $\langrll$ (for ``higher-level'' and
``lower-level''). $\langrhl$ has boolean, sum, and product types, whereas
$\langrll$ has arrays ($\rllang{[e_1,\ldots,e_n]} : \rllang{[\tau]})$. Their
syntax is given in Fig.~\ref{fig:ref-syntax} and their static semantics ---
which are entirely standard --- are elided (see \cite{patterson22:semint-tr}).
These two languages are compiled (Fig.~\ref{fig:ref-compilers}---note that we write $e^{+}$ to indicate $e'$, where $e \rightsquigarrow e'$) into an untyped
stack-based language called $\langstack$ (inspired by \cite{Kleffner17}), whose
syntax and small-step operational semantics --- a relation on configurations
$\sconf{\heap}{\sstack}{\tlang{P}}$ comprised of a heap, stack, and program ---
are given in Fig.~\ref{fig:ref-target}; here we describe a few highlights.
First, we note that $\langstack$ values include not only numbers, thunks, and
locations, but arrays of values, a simplification we made for the sake of
presentation. Second, notice the interplay between $\tlang{thunk}$ and
$\tlang{lam}$: $\tlang{thunk}$s are suspended computations, whereas
$\tlang{lam}$ is an instruction (not a value) responsible solely for
substitution\footnote{À la Levy's Call-by-push-value \cite{levy01:phd}.}. We can see how these features are combined, or used separately, in
our compilers (Fig.~\ref{fig:ref-compilers}). Finally, note that for any
instruction where the precondition on the stack is not met, the configuration
steps to a program with $\tyfail$ (a dynamic type error), although we elide most of these reduction rules for
space.

\begin{figure}
  \begin{small}
    \[
      \begin{array}{llcl}
        \langrhl &

        \text{Type~} \rhlang{\tau} & ::= & \rhlang{unit} \vo \rhlang{bool} \vo \rhlang{\tau {+} \tau} \vo
        \rhlang{\tau {\times} \tau} \vo \rhlang{\tau {\rightarrow} \tau} \vo \rhlang{ref\,\tau}\\

        & \text{Expr.~} \rhlang{e} & ::= & \rhlang{()} \vo \rhlang{true} \vo \rhlang{false} \vo \rhlang{x} \vo
        \rhlang{inl~e} \vo \rhlang{inr~e}  \\
        & & & \hspace{-1cm}
        \vo
        \rhlang{(e,e)} \vo \rhlang{fst~e} \vo \rhlang{snd~e} \vo
        \rhlang{if~e~e~e} \vo \rhlang{\lambda x : \tau.e} \vo
        \rhlang{e~e}   \\
        & & & \hspace{-1cm}\vo \rhlang{match~e~x\{e\}~y\{e\}} \vo \rhlang{ref~e} \vo \rhlang{!e}
        \vo \rhlang{e := e} \vo \rhlang{\convert{\tau}{\rllang{e}}}\\

        \langrll&

        \text{Type~} \rllang{\tau} & ::= & \rllang{int} \vo \rllang{[\tau]}
        \vo \rllang{\tau \rightarrow \tau} \vo \rllang{ref~\tau}\\
        & \text{Expr.~} \rllang{e} & ::= & \rllang{n} \vo \rllang{x} \vo
        \rllang{[e,\ldots]} \vo \rllang{e[e]} \vo \rllang{\lambda x : \tau.e}
        \vo \rllang{e~e}  \\
        & & & \hspace{-1cm}  \vo \rllang{e + e} \vo \rllang{if0~e~e~e} \vo \rllang{ref~e} \vo \rllang{!e} \vo \rllang{e :=
          e} \vo
        \rllang{\convert{\tau}{\rhlang{e}}}
      \end{array}
    \]
    \caption{Syntax for $\langrhl$ and $\langrll$.}
    \label{fig:ref-syntax}
  \end{small}
\end{figure}

\begin{figure}
  \begin{small}
    \[
      \begin{array}{l@{\quad}c@{\quad}l}
        \text{Program}~ \tlang{P} &::=& \tlang{\cdot} \vo \tlang{i, P} \quad
        \text{Value}~ \tlang{v} ~~~::=~~~ \tlang{n} \vo \tlang{thunk~P} \vo
        \tlang{\loc} \vo \tlang{[v,\ldots]}\\
        \text{Instruction}~ \tlang{i} & ::= & \tlang{push~v} \vo \tlang{add} \vo
        \tlang{less?} \vo \tlang{if0~P~P} \vo \tlang{lam~x.P} \vo \tlang{call}
         \\
        & & \vo \tlang{idx} \vo \tlang{len} \vo \tlang{alloc} \vo \tlang{read} \vo
        \tlang{write} \vo \tlang{\fail{c}} \\
        \text{Error Code}~ \tlang{c} &::=& \tlang{\tyerrcode} \vo \tlang{\idxerrcode} \vo \tlang{\converrcode}\\
        \text{Heap}~\heap &::=& \{\ell \!:\! \tlang{v}, \ldots\} \qquad
        \text{Stack}~\sstack ~~~::=~~~ \tlang{v}, \ldots, \tlang{v} \vo \tlang{Fail~c} \\
      \end{array}
    \]
    \[\arraycolsep=0.25pt
      \begin{array}[t]{lcll}
        \sconf{\heap}{\sstack}{\tlang{push~v, P}} 
        &\rightarrow&
        \sconf{\heap}{\sstack, \tlang{v}}{\tlang{P}} 
        & (\sstack\neq \tlang{Fail~c})\ \\

        \sconf{\heap}{\sstack, \tlang{n', n}}{\tlang{add, P}} 
        &\rightarrow&
        \sconf{\heap}{\sstack, \tlang{(n + n')}}{\tlang{P}}
        & \\

        \sconf{\heap}{\sstack, \tlang{n', n}}{\tlang{less?, P}} 
        &\rightarrow&
        \sconf{\heap}{\sstack, \tlang{b}}{\tlang{P}}
        & \hspace{-0.75cm}(\tlang{b}\!=\!\tlang{0}\hspace{2.15pt}\text{if}\hspace{2.15pt}\tlang{n}\!<\!\tlang{n'}\hspace{2.15pt}\text{else}\hspace{2.15pt}\tlang{1}) \\

        \sconf{\heap}{\sstack, \tlang{n}}{\tlang{if0~P_1~P_2, P}} 
        &\rightarrow&
        \sconf{\heap}{\sstack}{\tlang{P_i , P}}
        & \hspace{-0.75cm}(\tlang{i}\!=\!\tlang{1}\hspace{2.15pt}\text{if}\hspace{2.15pt}\tlang{n}\!=\!\tlang{0}\hspace{2.15pt}\text{else}\hspace{2.15pt}\tlang{2}) \\

        \sconf{\heap}{\sstack}{\tlang{if0~P_1~P_2, P}}
        &\rightarrow&
        \sconf{\heap}{\sstack}{\tlang{\tyfail}}
        & (\sstack \neq \sstack', \tlang{n}) \\

        \sconf{\heap}{\sstack, \tlang{v}}{\tlang{lam~x.P_1, P_2}} 
        &\rightarrow& 
        \sconf{\heap}{\sstack}{\tlang{[x \!\mapsto\! v]P_1, P_2}}
        & \\
        \sconf{\heap}{\sstack, \tlang{thunk~P_1}}{\tlang{call, P_2}} 
        &\rightarrow& 
        \sconf{\heap}{\sstack}{\tlang{P_1, P_2}}
        & \\

        \sconf{\heap}{\sstack, \tlang{[v_0, \ldots, v_{n'}], n}}{\tlang{idx, P}} 
        &\rightarrow& 
        \sconf{\heap}{\sstack, \tlang{v_{n}}}{\tlang{P}}
        & (\tlang{n}\!\in\![\tlang{0}, \tlang{n'}]) \\

        \sconf{\heap}{\sstack, \tlang{[v_0, \ldots, v_{n'}], n}}{\tlang{idx, P}} 
        &\rightarrow& 
        \sconf{\heap}{\sstack}{\tlang{\idxfail}}
        & (\tlang{n}\!\notin\![\tlang{0}, \tlang{n'}]) \\

        \sconf{\heap}{\sstack, \tlang{[v_0, \ldots, v_n]}}{\tlang{len, P}} 
        &\rightarrow& 
        \sconf{\heap}{\sstack, \tlang{(n + 1)}}{\tlang{P}}
        & \\

        \sconf{\heap}{\sstack, \tlang{v}}{\tlang{alloc, P}} 
        &\rightarrow& 
        \sconf{\heap \!\uplus\! \{\tlang{\ell \!:\! v}\}}{\sstack, \tlang{\ell}}{\tlang{P}}
        & \\

        \sconf{\heap \!\uplus\! \{\tlang{\ell \!:\! v}\}}{\sstack, \tlang{\ell}}{\tlang{read, P}} 
        &\rightarrow& 
        \sconf{\heap \!\uplus\! \{\tlang{\ell \!:\! v}\}}{\sstack, \tlang{v}}{\tlang{P}}
        & \\

        \sconf{\heap \!\uplus\! \{\tlang{\ell \!:\! \_}\}}{\sstack, \tlang{\ell, v}}{\tlang{write, P}} 
        &\rightarrow& 
        \sconf{\heap \!\uplus\! \{\tlang{\ell \!:\! v}\}}{\sstack}{\tlang{P}}
        & \\

        \sconf{\heap}{\sstack}{\tlang{\fail{c}, P}}
        &\rightarrow&
        \sconf{\heap}{\tlang{Fail~c}}{\cdot}
        & \\
      \end{array}
    \]
    \caption{Syntax and selected operational semantics for $\langstack$ (most
      $\tyfail$ cases elided).}
    \label{fig:ref-target}
  \end{small}
\end{figure}

\begin{figure}
  \begin{small}
    \[
      \begin{array}{c}
        \tlang{SWAP \triangleq lam~x.(lam~y.push~x,push~y)}\\
        \tlang{DROP \triangleq lam~x.()} \qquad
        \tlang{DUP \triangleq lam~x.(push~x,push~x)}
      \end{array}
    \]
    \[
      \begin{array}[t]{lcl}
    \rhlang{()} \rightsquigarrow \tlang{push~0} & \mid &
    \rhlang{x} \rightsquigarrow \tlang{push~x}\\
    \rhlang{true} \mid \rhlang{false} & \rightsquigarrow & \tlang{push~\langle 0 \mid 1\rangle}\\
    \rhlang{inl~e} \mid \rhlang{inr~e} & \rightsquigarrow & \tlang{\rhlang{e^\tlang{+}},lam~x.(push~[\langle 0 \mid 1\rangle,x])}\\
    \rhlang{if~e~e_1~e_2} & \rightsquigarrow &
    \tlang{\rhlang{e^\tlang{+}},if0~\rhlang{e_1^\tlang{+}}~\rhlang{e_2^\tlang{+}}}\\
    \rhlang{match~e} & \rightsquigarrow & \tlang{\rhlang{e^\tlang{+}},DUP,push~1,idx,SWAP,push~0,} \\
    \enspace \rhlang{~x\{e_1\}~y\{e_2\}} & & 
    \enspace \tlang{idx,if0~(lam~x.\rhlang{e_1^\tlang{+}})~(lam~y.\rhlang{e_2^\tlang{+}})}\\

    \rhlang{(e_1,e_2)} & \rightsquigarrow &
    \tlang{\rhlang{e_1^\tlang{+}},\rhlang{e_2^\tlang{+}},lam~x_2,x_1.(push~[x_1,x_2])}\\
    \rhlang{fst~e} \mid \rhlang{snd~e} & \rightsquigarrow & \tlang{\rhlang{e^\tlang{+}},push~\langle 0 \mid 1\rangle,idx}\\

    \rhlang{e_1~e_2} & \rightsquigarrow & \tlang{\rhlang{e_1^\tlang{+}},\rhlang{e_2^\tlang{+}},SWAP,call}\\
    \rhlang{ref~e} & \rightsquigarrow & \tlang{\rhlang{e^\tlang{+}},alloc}\\
    \rhlang{e_1 := e_2} & \rightsquigarrow & \tlang{\rhlang{e_1^\tlang{+}},\rhlang{e_2^\tlang{+}},write,push~0}\\
    \rhlang{\convert{\tau}{\rllang{e}}} & \rightsquigarrow &
    \rllang{e^\tlang{+}},C_{\rllang{\tau}{\mapsto}\rhlang{\tau}}\\
    \rllang{n} \rightsquigarrow \tlang{push~n} & \mid &
    \rllang{e_1+e_2} \rightsquigarrow \tlang{\rllang{e_1^\tlang{+}},\rllang{e_2^\tlang{+}},SWAP,add}\\
    \rllang{[e_1,\ldots,e_n]} & \rightsquigarrow &
    \tlang{\rllang{e_1^\tlang{+}},\ldots,\rllang{e_n^\tlang{+}},lam~x_n,\ldots,x_1.} \\
    & & \enspace \tlang{(push~[x_1,\ldots, x_n])}\\
    \rllang{e_1[e_2]} & \rightsquigarrow & \tlang{\rllang{e_1^\tlang{+}},\rllang{e_2^\tlang{+}},idx}\\
    \rllang{\lambda x : \tau.e} & \rightsquigarrow & \tlang{push~(thunk~lam~x.\rllang{e^\tlang{+}})}\\
    \rllang{!e} & \rightsquigarrow & \tlang{\rllang{e^\tlang{+}},read}\\
    \rllang{\convert{\tau}{\rhlang{e}}} & \rightsquigarrow &
    \rhlang{e^\tlang{+}},C_{\rhlang{\tau}{\mapsto}\rllang{\tau}}
  \end{array}
    \]
    \caption{Selections from compilers for $\langrhl$ and $\langrll$.}
    \label{fig:ref-compilers}
  \end{small}
\end{figure}

\paragraph{Convertibility} In our source languages, we may syntactically embed a
term from one language into the other using the boundary forms 
$\rhlang{\convert{\tau_A}{\rllang{e}}}$ and
$\rllang{\convert{\tau_B}{\rhlang{e}}}$. The typing rules for boundary terms
require that the boundary types be convertible, 
written $\rhlang{\tau_A} \conv \rllang{\tau_B}$. Those typing rules are:

\begin{small}
\begin{mathpar}
  \inferrule{\rllang{\Gamma};\rhlcolor{\Gamma} \vdash \rhlang{e} : \rhlang{\tau_A} \\
    \rhlang{\tau_A} \conv \rllang{\tau_B}}{\rhlcolor{\Gamma};\rllang{\Gamma} \vdash \rllang{\convert{\tau_B}{\rhlang{e}}} : \rllang{\tau_B}}

  \inferrule{\rhlcolor{\Gamma};\rllang{\Gamma} \vdash \rllang{e} : \rllang{\tau_B} \\
     \rhlang{\tau_A} \conv \rllang{\tau_B}}{\rllang{\Gamma};\rhlcolor{\Gamma} \vdash
    \rhlang{\convert{\tau_A}{\rllang{e}}} : \rhlang{\tau_A}}
\end{mathpar}
\end{small}

Note that the convertibility judgment is a declarative, extensible judgment that
describes closed types in one language that are interconvertible with closed
types in the other, allowing for the possibility of well-defined runtime errors.
By separating this judgment from the rest of the type system, the language
designer can allow additional conversions to be added later, whether by
implementers or even end-users. The second thing to note is that this
presentation allows for open terms to be converted, so we must maintain a type
environment for both languages during typechecking (both $\rhlcolor{\Gamma}$ and
$\rllang{\Gamma}$), as we have to carry information from the site of
binding---possibly through conversion boundaries---to the site of variable use.
A simpler system, which we have explored, would only allow closed terms to be
converted. In that case, the typing rules still use the
$\rhlang{\tau_A} \conv \rllang{\tau_B}$ judgment but do not thread foreign
environments (using only $\rhlcolor{\Gamma}$ for $\langrhl$ and only
$\rllang{\Gamma}$ for $\langrll$).

We present, in Fig.~\ref{fig:ref-conv}, some of the convertibility rules we
have defined for this case study (we elide $\rhlang{\tau_1 \times \tau_2} \conv \rllang{[\tau]}$), which come with target-language instruction
sequences that perform the conversions, written $C_{\rhlang{\tau_A}{\mapsto}\rllang{\tau_B}}$ (some are no-ops). An
instruction sequence $C_{\rhlang{\tau_A}{\mapsto}\rllang{\tau_B}}$, while ordinary
target code, when appended to a program in the model at type $\rhlang{\tau_A}$,
should result in a program in the model at type $\rllang{\tau_B}$. An
implementer can write these conversions based on understanding
of the sets of target terms that inhabit each source type, before defining a
proper semantic model (or possibly, without defining one, if formal soundness is
not required). They would do this based on inspection of the compiler and the
target. 

From Fig.~\ref{fig:ref-compilers}, we see that $\rhlang{bool}$ and
$\rllang{int}$ both compile to target integers, and importantly, that
$\rhlang{if}$ compiles to $\tlang{if0}$, which 
means the compiler interprets $\rhlang{false}$ as any non-zero
integer. Hence, conversions between $\rhlang{bool}$ and $\rllang{int}$ are identities.

For sums, we use the tags $\tlang{0}$ and $\tlang{1}$, and as
for $\rhlang{if}$, we use $\tlang{if0}$ to branch in the compilation of
$\rhlang{match}$. Therefore, we can choose if the $\rhlang{inl}$ and
$\rhlang{inr}$ tags should be represented by $0$ and $1$, or by $0$ and any other
integer $\tlang{n}$. Given that tags could be added later, we
choose the former, thus converting a sum to an array of integers is
mostly a matter of converting the payload. In the other direction, we have to
handle the case that the array is too short, and error.

The final case, between $\rhlang{ref~bool}$ and
$\rllang{ref~int}$, is the reason for this case study.
Intuitively, if you exchange pointers, any value of the new type can now be
written at that address, and thus must have been compatible with the old type
(as aliases could still exist). Thus, we require that $\rhlang{bool}$ and
$\rllang{int}$ are somehow ``identical'' in the target, so conversions are unnecessary.

\begin{figure}
\begin{small}
\begin{mathpar}
  \inferrule{ }{
    \tlang{C_{\rhlang{bool} \mapsto \rllang{int}}},
    \tlang{C_{\rllang{int} \mapsto \rhlang{bool}}}
    :
    \rhlang{bool}
    \conv
    \rllang{int}
  }
  \and
  \inferrule{ }{
    \tlang{C_{\rhlang{ref~bool} \mapsto \rllang{ref~int}}},
    \tlang{C_{\rllang{ref~int} \mapsto \rhlang{ref~bool}}}
    :
    \rhlang{ref~bool}
    \conv
    \rllang{ref~int}
  }
  \and
  \inferrule{
    \tlang{C_{\rhlang{\tau_1} \mapsto \rllang{int}}},
    \tlang{C_{\rllang{int} \mapsto \rhlang{\tau_1}}}
    :
    \rhlang{\tau_1}
    \conv
    \rllang{int}
    \\
    \tlang{C_{\rhlang{\tau_2} \mapsto \rllang{int}}},
    \tlang{C_{\rllang{int} \mapsto \rhlang{\tau_2}}}
    :
    \rhlang{\tau_2}
    \conv
    \rllang{int}
  }{
    \tlang{C_{\rhlang{\tau_1 + \tau_2} \mapsto \rllang{[int]}}},
    \tlang{C_{\rllang{[int]} \mapsto \rhlang{\tau_1 + \tau_2}}}
    :
    \rhlang{\tau_1 + \tau_2}
    \conv
    \rllang{[int]}
  }
\end{mathpar}
\[
  \tlang{C_{\rhlang{bool} \mapsto \rllang{int}}} \triangleq
  \tlang{C_{\rllang{int} \mapsto \rhlang{bool}}} \triangleq
  \tlang{C_{\rhlang{ref~bool} \mapsto \rllang{ref~int}}} \triangleq
  \tlang{C_{\substack{\rllang{ref} \rllang{int}} \mapsto
      \substack{\rhlang{ref}\\\rhlang{bool}}}} \triangleq \cdot
\]
\[
    \begin{array}[t]{l}
      \tlang{C_{\rhlang{\tau_1 + \tau_2} \mapsto \rllang{[int]}}} \triangleq
         \tlang{DUP,~push~1,~idx,~SWAP,}
        \tlang{push~0,~idx,~DUP,} \\
        \hspace{2cm} \tlang{if0~(SWAP,~C_{\rhlang{\tau_1} \mapsto \rllang{int}})}\\
        \hspace{2.4cm}\tlang{(SWAP,~C_{\rhlang{\tau_2} \mapsto \rllang{int}}),}
        \tlang{lam~x_v.lam~x_t.push~[x_t, x_v]}\\

      \tlang{C_{\rllang{[int]} \mapsto \rhlang{\tau_1 + \tau_2}}} \triangleq\\
      \quad \tlang{DUP,~len,~push~2,~SWAP,}
        \tlang{less?,~if0~\convfail,} \\
        \quad \tlang{DUP,push~1,~idx,~SWAP,}
        \tlang{push~0,idx,~DUP,} \\
        \quad \tlang{if0~(SWAP,~C_{\rllang{int} \mapsto \rhlang{\tau_1}})}
        \tlang{\big(DUP,push~{-1},~add,}\\
        \quad\enspace \tlang{if0~(SWAP,~C_{\rllang{int} \mapsto \rhlang{\tau_2}})}
        \tlang{(\convfail)\big),} \tlang{lam~x_v.lam~x_t.push~[x_t, x_v]}
    \end{array}
  \]
\end{small}
\caption{Conversions for $\langrhl$ and $\langrll$.}
\label{fig:ref-conv}
\end{figure}

\paragraph{Semantic Model} Declaring that a type $\rhlang{bool}$ is
``identical'' to $\rllang{int}$ or that $\rhlang{\tau}$ is convertible
to $\rllang{\tau}$ and providing the conversion code is not sufficient for
soundness. In order to show that these conversions are sound, and indeed to
understand which conversions are even possible, we define a model for source
types that is inhabited by target terms. Since both languages compile to the
same target, the range of their relations will be the same (i.e., composed of
terms and values from $\langstack$), and thus we will be able to easily and
directly compare the inhabitants of two types, one from each language.

Our model, which aside from the use of $\langstack$ is a standard step-indexed unary logical relation for a
language with mutable state (essentially following~\citet{ahmed04:phd}), is
presented with some parts elided in Fig.~\ref{fig:ref-lr} (see \cite{patterson22:semint-tr}).

We give value interpretations for each source type $\tau$, written
$\Vrel{\tau}{}$ as sets of target \emph{values} $\tlang{v}$ paired with
\emph{worlds} $\W$ that inhabit that type. A world $\W$ is comprised of a step
index $k$ and a \emph{heap typing} $\heapty$, which maps locations to type
interpretations in $Typ$. As is standard, $Typ$ is the set of valid type
interpretations, which must be closed under world extension. A future world
$\W'$ extends $W$, written $W'\sqsupseteq W$, if $W'$ has a potentially lower
step budget $j \leq \W.k$ and all locations in $W.\heapty$ still have the same
types (to approximation $j$). 

Intuitively, $(\W,\tlang{v}) \in \Vrel{\tau}{}$ says
that the target value $\tlang{v}$ belongs to (or behaves like a value of) type
$\tau$ in world $W$. For example, $\Vrel{\rhlang{unit}}{}$ is inhabited by
$\tlang{0}$ in any world. A more interesting case is $\Vrel{\rhlang{bool}}{}$,
which is the set of all target integers, not just $\tlang{0}$ and $\tlang{1}$,
though we could choose to define our model that way (provided we compiled
$\rhlang{bool}$s to $\tlang{0}$ or $\tlang{1}$). An array
$\Vrel{\rllang{[\tau]}}{}$ is inhabited by an array of target values
$\tlang{v_i}$ in world $\W$ if each $\tlang{v_i}$ is in $\Vrel{\rllang{\tau}}{}$
with $\W$.

Functions follow the standard pattern for logical relations, appropriately
adjusted for our stack-based target language:
$\Vrel{\rhlang{\tau_1\rightarrow\tau_2}}{}$ is inhabited by values
$\tlang{thunk~lam~x.P}$ in world $\W$ if, for any future world $\W'$ and argument
$\tlang{v}$ in $\Vrel{\rhlang{\tau_1}}{}$ at that world, the result of substituting the
argument into the body ($\tlang{[x{\mapsto}v]P}$) is in the expression relation
at the result type $\Erel{\rhlang{\tau_2}}{}$. Reference types
$\Vrel{\rhlang{ref~\tau}}{}$ are inhabited by a location $\loc$ in world $\W$ if
the current world's heap typing $\W.\heapty$ maps $\loc$ to the value relation
$\Vrel{\rhlang{\tau}}{}$ approximated to the step index in the world
$\W.k$. (The $j$-approximation of a type, written $\floor{j}{\Vrel{\tau}{}}$, restricts
$\Vrel{\tau}{}$ to inhabitants with worlds in $World_{j}$.)

Our expression relation $\Erel{\tau}{}$ defines when a program $\tlang{P}$
in world $\W$ behaves as a computation of type $\tau$. It says that for any heap
$\heap$ that satisfies the current world $\W$, written $H : \W$, and any non-$\tlang{Fail}$ stack
$\sstack$, if the machine $\sconf{\heap}{\sstack}{\tlang{P}}$ terminates in $j$
steps (where $j$ is less than our step budget $\W.k$), then either it ran to a non-type
error or there exists some value 
$\tlang{v}$ and some future world $\W'$ such that the resulting stack $\sstack'$
is the original stack with $\tlang{v}$ on top, the resulting heap $\heap'$
satisfies the future world $\W'$ and 
$\W'$ and $\tlang{v}$ are in $\Vrel{\tau}{}$.


At the bottom of Fig.~\ref{fig:ref-lr}, we show a syntactic shorthand,
$\sem{\rhlcolor{\Gamma}; \rllang{\Gamma}  \vdash \rllang{e} : \rllang{\tau}}$,
for showing that well-typed source programs, when compiled and closed off with
well-typed substitutions $\stack$ that map variables to target values, are in the
expression relation. Note $\Envrel{\Gamma}{}$ contains
closing substitutions $\stack$ in world $W$ that assign every $x : \tau \in
\Gamma$ to a $\tlang{v}$ such that $(W,\tlang{v}) \in \Vrel{\tau}{}$.

With our logical relation in hand, we can now state formal properties about our
convertibility judgments.

\begin{lemma}[Convertibility Soundness]\label{lemma:ref-conv-soundness}~

  If $\rhlang{\tau} \conv \rllang{\tau}$, then
  $\forall (\W,P) \!\in\! \Erel{\rhlang{\tau}}{}.
  (\W,(P,C_{\rhlang{\tau}{\mapsto}\rllang{\tau}})) \!\in\! \Erel{\rllang{\tau}}{}
  $ $\land~ \forall (\W,P) \!\in\! \Erel{\rllang{\tau}}{}.
  (\W,(P,C_{\rllang{\tau}{\mapsto}\rhlang{\tau}})) \!\in\! \Erel{\rhlang{\tau}}{}$.
\end{lemma}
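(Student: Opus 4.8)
The plan is to proceed by induction on the derivation of $\rhlang{\tau} \conv \rllang{\tau}$, establishing both conjuncts simultaneously, since each convertibility rule in Fig.~\ref{fig:ref-conv} supplies the glue code for \emph{both} directions $C_{\rhlang{\tau}{\mapsto}\rllang{\tau}}$ and $C_{\rllang{\tau}{\mapsto}\rhlang{\tau}}$ at once. The two base rules are handled by an equality-of-interpretations argument, while the structural rules for sums (and the elided products) require tracing the conversion code operationally and appealing to the inductive hypothesis on the payloads.

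For the base cases $\rhlang{bool} \conv \rllang{int}$ and $\rhlang{ref~bool} \conv \rllang{ref~int}$, both conversions are the no-op $\cdot$, so appending $C$ leaves the program $P$ syntactically unchanged. It therefore suffices to show that the two expression relations coincide, which reduces to showing the value relations coincide. For $\rhlang{bool}$ and $\rllang{int}$ this is immediate from the model: both $\Vrel{\rhlang{bool}}{}$ and $\Vrel{\rllang{int}}{}$ are exactly the pairs $(\W,\tlang{n})$ with $\tlang{n}$ an integer. For references I would lift this equality through the heap-typing condition: $(\W,\loc) \in \Vrel{\rhlang{ref~bool}}{}$ iff $\W.\heapty(\loc) = \floor{\W.k}{\Vrel{\rhlang{bool}}{}}$, and since $\Vrel{\rhlang{bool}}{} = \Vrel{\rllang{int}}{}$ their $\W.k$-approximations agree, whence $\Vrel{\rhlang{ref~bool}}{} = \Vrel{\rllang{ref~int}}{}$. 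This is precisely the ``identical in the target'' property the case study set out to justify, and it is what makes a no-op conversion of aliased references sound.

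The inductive cases carry the real work. Taking the sum direction $C_{\rhlang{\tau_1+\tau_2}{\mapsto}\rllang{[int]}}$: given $(\W,P) \in \Erel{\rhlang{\tau_1+\tau_2}}{}$, I would unfold the expression relation, fix an arbitrary heap $\heap : \W$ and non-$\tlang{Fail}$ stack $\sstack$, and run $\sconf{\heap}{\sstack}{\tlang{P}, C}$. Since $P$ alone is in the expression relation, it either steps to a well-defined error (so the whole program does too, discharging the case) or terminates with some value $\tlang{v}$ atop the stack, a heap $\heap'$, and a future world $\W'$ with $(\W',\tlang{v}) \in \Vrel{\rhlang{\tau_1+\tau_2}}{}$. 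By the sum value relation, $\tlang{v}$ is an array $\tlang{[\langle 0\mid 1\rangle, v_p]}$ whose tag selects which $\Vrel{\rhlang{\tau_i}}{}$ contains the payload $\tlang{v_p}$. I would then trace the fixed $\tlang{DUP}$/$\tlang{idx}$/$\tlang{SWAP}$/$\tlang{if0}$ block, observing that it extracts tag and payload, branches correctly, and leaves $\tlang{v_p}$ positioned for the recursive call $C_{\rhlang{\tau_i}{\mapsto}\rllang{int}}$. Because $\tlang{v_p}$ is a value in $\Vrel{\rhlang{\tau_i}}{}$ at $\W'$, the one-instruction program that pushes it is trivially in $\Erel{\rhlang{\tau_i}}{}$ at $\W'$, so the inductive hypothesis applies and yields a converted payload in $\Vrel{\rllang{int}}{}$. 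The closing $\tlang{lam}$ reassembles $\tlang{[tag, v_p']}$, and as the tag is an integer and $\tlang{v_p'} \in \Vrel{\rllang{int}}{}$, the result lies in $\Vrel{\rllang{[int]}}{}$. The reverse direction $C_{\rllang{[int]}{\mapsto}\rhlang{\tau_1+\tau_2}}$ is analogous, except that the leading length check and the nested tag test may fire; in each such case the program runs to the permitted error $\convfail$, which the expression relation explicitly tolerates.

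I expect the main obstacle to be the step-index and world bookkeeping in the inductive cases: I must account for the steps consumed by the several conversion instructions so that the inductive hypothesis is invoked on the payload at a strictly usable index below $\W.k$, and I must use monotonicity (closure under world extension of $\Typ$) to keep $\tlang{v}$ and $\tlang{v_p}$ in their relations as the world advances from $\W$ to $\W'$ and beyond. The operational trace itself is routine but tedious, and each error branch in the $\rllang{[int]} \mapsto \rhlang{\tau_1+\tau_2}$ direction must be verified to land in the permitted error set rather than getting stuck.
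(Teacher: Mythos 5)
Your proposal is correct and follows essentially the same route as the paper: the paper's proof sketches exactly the $\rhlang{ref~bool}\conv\rllang{ref~int}$ case by arguing the conversion is a no-op and reducing soundness to $\Vrel{\rhlang{bool}}{} = \Vrel{\rllang{int}}{}$ lifted through the heap typing, which is precisely your base-case argument. Your treatment of the inductive sum case (operational tracing of the glue code, re-entering the expression relation on the payload, and tolerating $\convfail$ in the error branches) matches the standard development the paper elides to its technical appendix, so there is nothing further to compare.
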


\begin{proof}
  We sketch the $\rhlang{ref~bool}\conv\rllang{ref~int}$ case; (rest elided, see \cite{patterson22:semint-tr}). For
  $\rhlang{ref~bool}\conv\rllang{ref~int}$, what we need to show is that given
  any expression in $\Erel{\rhlang{ref~bool}}{}$, if we apply the conversion
  (which does nothing), the result will be in $\Erel{\rllang{ref~int}}{}$. That
  requires $\Vrel{\rhlang{ref~bool}}{} =
  \Vrel{\rllang{ref~int}}{}$.

  The value relation at a reference type says that if you look up the location
  $\loc$ in the heap typing of the world ($\W.\heapty$), you will get the value
  interpretation of the type. That means a $\rhlang{ref~bool}$ must be a
  location $\tlang{\ell}$ that, in the model, points to the value interpretation
  of $\rhlang{bool}$ (i.e., $\Vrel{\rhlang{bool}}{}$). In our model, this must
  be true for all future worlds, which makes sense for ML-style references. Thus, for this proof to
  go through, $\Vrel{\rhlang{bool}}{}$ must be the same as
  $\Vrel{\rllang{int}}{}$, which it is.
\end{proof}

Once we have proved Lemma~\ref{lemma:ref-conv-soundness}, we can prove semantic
type soundness in the standard two-step way for our entire system. First, for
each source typing rule, we define a compatibility lemma that is a semantic
analog to that rule. For example, the compatibility lemma for the conversion
typing rule, shown here, requires the proof of
Lemma~\ref{lemma:ref-conv-soundness} to go through:

\[
  \sem{
    \rhlcolor{\Gamma}; \rllang{\Gamma}  \vdash
    \rllang{e} : \rllang{\tau}
  } \land
  \rhlang{\tau} \conv \rllang{\tau}
  \implies
  \sem{
    \rllang{\Gamma}; \rhlcolor{\Gamma} \vdash
    \rhlang{\rhlang{\convert{\tau}{\rllang{e}}}} : \rhlang{\tau}
  }
\]

Once we have all compatibility lemmas we can prove the following theorems as a consequence:

\begin{theorem}[Fundamental Property]\label{theorem:ref-fundamental}~

  If $~\rhlcolor{\Gamma}; \rllang{\Gamma}  \vdash
  \rllang{e} : \rllang{\tau}$ then $\sem{\rhlcolor{\Gamma}; \rllang{\Gamma}  \vdash
    \rllang{e} : \rllang{\tau}}$ and   if $~\rllang{\Gamma}; \rhlcolor{\Gamma}  \vdash
  \rhlang{e} : \rhlang{\tau}$ then $\sem{\rllang{\Gamma}; \rhlcolor{\Gamma}  \vdash
    \rhlang{e} : \rhlang{\tau}}$.

\end{theorem}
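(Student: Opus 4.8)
The plan is to prove both implications simultaneously, by a single mutual induction on the structure of the two typing derivations. This simultaneity is essential: the judgments $\rhlcolor{\Gamma};\rllang{\Gamma} \vdash \rhlang{e} : \rhlang{\tau}$ and $\rhlcolor{\Gamma};\rllang{\Gamma} \vdash \rllang{e} : \rllang{\tau}$ are mutually recursive, since the boundary rules for $\rhlang{\convert{\tau_A}{\rllang{e}}}$ and $\rllang{\convert{\tau_B}{\rhlang{e}}}$ each take a premise that is a typing derivation in the \emph{other} language. Because such a premise is always a strict sub-derivation, induction on the combined derivation height is well-founded, and it hands me, as induction hypotheses, the semantic judgment $\sem{\cdot}$ for every immediate sub-derivation regardless of which of the two languages that sub-derivation inhabits.

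With the induction so arranged, the body of the proof is a case analysis on the last typing rule applied. For each rule I simply invoke the corresponding compatibility lemma from \S\ref{framework:step5}, supplying the induction hypotheses obtained for that rule's premises: for pair introduction I apply the compatibility lemma for pairs to the two IHs, for a variable or a base-value introduction I apply the relevant premise-free compatibility lemma, and likewise for every remaining rule of both $\langrhl$ and $\langrll$. Since each compatibility lemma is already phrased in terms of the closing-substitution interpretation $\sem{\cdot}$ --- which quantifies over all substitutions $\stack \in \Envrel{\Gamma}{}$ and over all worlds $\W$ --- the bookkeeping for open terms, world extension, and step indexing is discharged inside those lemmas, not in this induction.

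The two boundary cases are the only ones that cross between languages, and both are handled by the conversion compatibility lemma displayed immediately above the statement of Theorem~\ref{theorem:ref-fundamental}. In the case of $\rhlang{\convert{\tau}{\rllang{e}}}$, for example, the induction hypothesis supplies $\sem{\rhlcolor{\Gamma};\rllang{\Gamma} \vdash \rllang{e} : \rllang{\tau}}$ while the rule's side condition supplies $\rhlang{\tau} \conv \rllang{\tau}$; the conversion compatibility lemma then delivers $\sem{\rllang{\Gamma};\rhlcolor{\Gamma} \vdash \rhlang{\convert{\tau}{\rllang{e}}} : \rhlang{\tau}}$, which is exactly the conclusion required. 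This is the one place where Lemma~\ref{lemma:ref-conv-soundness} is consumed, as it is what guarantees that appending the glue code $C_{\rllang{\tau}{\mapsto}\rhlang{\tau}}$ to a program in $\Erel{\rllang{\tau}}{}$ produces a program in $\Erel{\rhlang{\tau}}{}$.

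Given the compatibility lemmas, no individual case carries real difficulty, so the genuine obstacle does not lie in this theorem but in marshaling its prerequisites: checking that the family of compatibility lemmas is exhaustive over the typing rules of both languages, and stating the induction so that cross-language sub-derivations are available as hypotheses. The single point needing care is that the induction must be truly simultaneous over both judgment forms; attempting to prove the two implications in isolation would stall at the boundary rules, whose premises refer to the other language's semantic judgment.
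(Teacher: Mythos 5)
Your proposal is correct and matches the paper's proof strategy exactly: a (mutual) induction over the typing derivations of both languages in which each case is discharged by the corresponding compatibility lemma, with the boundary cases handled by the conversion compatibility lemma backed by Lemma~\ref{lemma:ref-conv-soundness}. Your observation that the induction must be simultaneous over both judgment forms, since the boundary rules make the two type systems mutually recursive, is precisely the structural point the paper's ``standard two-step'' argument relies on.
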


\begin{theorem}[Type Safety for $\langrll$]\label{theorem:ref-safety-ll}{~}
  If $\rhlcolor{\cdot}; \rllang{\cdot}  \vdash
  \rllang{e} : \rllang{\tau}$ then for any $\heap : \W$,
  if $\sconf{\heap}{\cdot}{\rllang{e}^+}\steps{*} \sconf{\heap'}{\sstack'}{\tlang{P'}}$,
  then either $\sconf{\heap'}{\sstack'}{\tlang{P'}}\step \sconf{\heap''}{\sstack''}{\tlang{P''}}$, or $\tlang{P'}=\cdot$ and either $\sstack'=\tlang{Fail~c}$ for some $\tlang{c}\in\inlokerr$ or $\sstack'=\tlang{v}$.
\end{theorem}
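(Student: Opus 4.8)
The plan is to derive this as a corollary of the Fundamental Property (Theorem~\ref{theorem:ref-fundamental}), in the standard ``fundamental-property-implies-safety'' style for semantic type soundness. First I would apply Theorem~\ref{theorem:ref-fundamental} to the closed, well-typed program $\rhlcolor{\cdot}; \rllang{\cdot} \vdash \rllang{e} : \rllang{\tau}$ to obtain the semantic judgment $\sem{\rhlcolor{\cdot}; \rllang{\cdot} \vdash \rllang{e} : \rllang{\tau}}$. Because both typing environments are empty, the only closing substitution in $\Envrel{\rhlcolor{\cdot}; \rllang{\cdot}}{\W}$ is the empty one, so this judgment unfolds directly to $(\W, \rllang{e}^+) \in \Erel{\rllang{\tau}}{}$ for every world $\W$, with no substitution to manage.

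Next I would fix the concrete reduction $\sconf{\heap}{\cdot}{\rllang{e}^+} \steps{*} \sconf{\heap'}{\sstack'}{\tlang{P'}}$ and let $m$ be its length. Instantiating the membership $(\W, \rllang{e}^+) \in \Erel{\rllang{\tau}}{}$ at a world $\W$ with $\heap : \W$ and $\W.k > m$, I would then case-split on whether $\sconf{\heap'}{\sstack'}{\tlang{P'}}$ is reducible. If it steps, we are immediately in the first disjunct of the theorem (progress). Otherwise the configuration is irreducible, so the given run is a \emph{terminating} run of length $m < \W.k$ that starts from a heap satisfying $\W$ and from the non-$\tlang{Fail}$ empty stack; this is exactly the hypothesis of the expression relation $\Erel{\rllang{\tau}}{}$. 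Its conclusion then yields precisely the two remaining cases: either the machine ran to a non-type error, giving $\tlang{P'} = \cdot$ with $\sstack' = \tlang{Fail~c}$, or it produced a value, giving $\tlang{P'} = \cdot$ with $\sstack' = \tlang{v}$ (a single value, since the initial stack was empty) and $(\W', \tlang{v}) \in \Vrel{\rllang{\tau}}{}$ at some future world $\W'$. Discarding the world and value-relation information leaves exactly the second disjunct of the theorem.

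Two points require care, and the step-index bookkeeping is the one I expect to be the real obstacle. The expression relation only constrains runs shorter than $\W.k$, so the chosen world must have step budget exceeding the fixed but arbitrary length $m$ of the concrete reduction while still being satisfied by the initial heap $\heap$; since Theorem~\ref{theorem:ref-fundamental} provides the membership at \emph{every} world, I can always select such a $\W$ (for a fresh computation $\heap$ is empty and satisfies the empty heap typing at any step index, and in general the hypothesis $\heap : \W$ supplies a heap typing to reuse at the larger index). The second, lighter, point is to confirm that the model's ``non-type error'' disjunct corresponds exactly to the admissible codes $\inlokerr = \{\converrcode, \idxerrcode\}$, i.e.\ that a well-typed program can fail only with a conversion or index error and never with $\tyfail$; this is what rules out the bad terminal configurations and makes the two cases of $\Erel{\rllang{\tau}}{}$ line up with those of the theorem. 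Everything else is definitional unfolding of the expression relation.
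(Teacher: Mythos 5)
Your proposal follows essentially the same route as the paper: the paper derives type safety as a consequence of the compatibility lemmas via the Fundamental Property, instantiating the semantic judgment (trivially, since both environments are empty) and unfolding $\Erel{\rllang{\tau}}{}$ at a world whose step budget exceeds the length of the given run, exactly as you describe, including your two points of care about the step index and the error codes $\inlokerr$. The one step you leave tacit is that an irreducible configuration must have $\tlang{P'} = \cdot$ --- which holds because every instruction whose stack precondition is unmet steps to $\tyfail$, so the machine is total on non-empty programs --- and this fact is needed to make your irreducibility case line up with the expression relation's hypothesis, which only constrains runs ending in the empty program.
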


\begin{theorem}[Type Safety for $\langrhl$]\label{theorem:ref-safety-hl}{~}
  If $\rllang{\cdot}; \rhlcolor{\cdot}  \vdash
  \rhlang{e} : \rhlang{\tau}$ then for any $\heap : \W$,
  if $\sconf{\heap}{\cdot}{\rhlang{e}^+}\steps{*} \sconf{\heap'}{\sstack'}{\tlang{P'}}$,
  then either $\sconf{\heap'}{\sstack'}{\tlang{P'}}\step \sconf{\heap''}{\sstack''}{\tlang{P''}}$, or $\tlang{P'}=\cdot$ and either $\sstack'=\tlang{Fail~c}$ for some $\tlang{c}\in\inlokerr$ or $\sstack'=\tlang{v}$.
\end{theorem}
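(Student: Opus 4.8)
The plan is to derive this from the Fundamental Property (Theorem~\ref{theorem:ref-fundamental}) by unfolding the expression relation $\Erel{\rhlang{\tau}}{}$; the argument is the exact mirror of the one for $\langrll$ (Theorem~\ref{theorem:ref-safety-ll}), with the two languages' roles swapped. Since the hypothesis types $\rhlang{e}$ under empty environments, the only closing substitution in $\Envrel{\rllang{\cdot};\rhlcolor{\cdot}}{}$ is the identity, so Theorem~\ref{theorem:ref-fundamental} specializes to $(\W, \rhlang{e}^+) \in \Erel{\rhlang{\tau}}{}$ for a suitable world $\W$. It then remains only to read safety of every configuration reachable from $\sconf{\heap}{\cdot}{\rhlang{e}^+}$ off the definition of $\Erel{\rhlang{\tau}}{}$.

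Concretely, I would fix the given reduction $\sconf{\heap}{\cdot}{\rhlang{e}^+} \steps{*} \sconf{\heap'}{\sstack'}{\tlang{P'}}$, let $m$ be its length, and instantiate Theorem~\ref{theorem:ref-fundamental} at a world $\W$ with $\heap : \W$ and step budget $\W.k > m$, run against the empty initial stack. I then case-split on whether $\tlang{P'} = \cdot$. If $\tlang{P'} \neq \cdot$, progress is immediate from the target semantics alone: the $\langstack$ reduction relation is \emph{total} on non-empty programs, since every instruction either has a matching rule or, when its stack precondition fails, steps to $\tyfail$ (which in turn steps to a $\tlang{Fail~c}$ configuration). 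Moreover a $\tlang{Fail~c}$ stack only arises with an empty program, so $\tlang{P'} \neq \cdot$ forces $\sstack'$ to be an ordinary value stack and the configuration steps, establishing the first disjunct. Here the logical relation plays no role --- the stack machine is engineered never to get stuck.

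If instead $\tlang{P'} = \cdot$, the configuration is irreducible and the run terminated in $m < \W.k$ steps, so the conclusion of $\Erel{\rhlang{\tau}}{}$ applies and delivers exactly the two admissible outcomes: either the run ended in a \emph{benign} (non-type) error, i.e. $\sstack' = \tlang{Fail~c}$ with $\tlang{c} \in \inlokerr$, or $\sstack'$ is the initial stack with a single value $\tlang{v}$ on top and $(\W',\tlang{v}) \in \Vrel{\rhlang{\tau}}{}$ for some $\W' \worldext \W$. Since the initial stack is empty, the latter reads $\sstack' = \tlang{v}$. The essential work is done by the fact that $\Erel{\rhlang{\tau}}{}$ \emph{forbids} $\tyfail$, which is precisely what excludes the ill-typed terminal state.

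I expect the step-index bookkeeping in the choice of $\W$ to be the only real obstacle. Because $\Erel{\rhlang{\tau}}{}$ constrains only runs shorter than $\W.k$, I must instantiate the Fundamental Property at a budget exceeding the arbitrary-but-fixed reduction length $m$ while simultaneously exhibiting $\heap : \W$ at that budget; this relies on the concrete initial heap inhabiting worlds of arbitrarily large index under a fixed heap typing. The remaining point is purely definitional: checking that the ``ran to a non-type error'' clause of $\Erel{\rhlang{\tau}}{}$ coincides with the benign-error set $\inlokerr = \{\converrcode,\idxerrcode\}$, so that the one excluded possibility is exactly $\tyfail$.
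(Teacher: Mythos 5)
Your proposal is correct and follows essentially the same route as the paper: the paper derives both type safety theorems as consequences of the compatibility lemmas via the Fundamental Property, then unfolds $\Erel{\rhlang{\tau}}{}$ exactly as you do, with progress for non-empty programs coming from the $\langstack$ machine's engineered totality (precondition failures step to $\tyfail$) rather than from the logical relation, and the Fail-stack-implies-empty-program observation closing the case split. Your flagged step-index point --- needing $\heap : \W$ at a budget exceeding the run length, since the conclusion of the theorem is independent of the particular $\W$ --- is the right concern and is handled by the standard instantiation in the technical appendix.
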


\begin{figure}
  \begin{small}
    \[
      \begin{array}{l}
      AtomVal_n = \{ (\W,\tlang{v}) \vo \W \in World_{n} \} \\[0.5em]
      \mathit{World}_n = \{(k,\heapty) \vo k < n \land \heapty \subset
      \mathit{HeapTy}_k\} \\[0.5em]
      \mathit{HeapTy}_n = \{\loc \mapsto \mathit{Typ}_{n}, \ldots \}\\[0.5em]
      Typ_n = \{ R \in 2^{AtomVal_n} \vo \forall (\W,\tlang{v}) \in
      R.~ \\ \qquad \qquad\forall\W'.~\W
      \worldext \W' \implies (\W',\tlang{v}) \in R  \}   \\[0.5em]
      \end{array}
    \]
    \[\arraycolsep=1.5pt
      \begin{array}[t]{rcl}
        \Vrel{\rhlang{bool}}{} &=& \{(\W,\tlang{n}) \} \qquad \Vrel{\rhlang{unit}}{} ~~=~~ \{(\W,\tlang{0}) \}\\
        \Vrel{\rhlang{\tau_1 + \tau_2}}{} &=& \{(\W, \tlang{[0,v]})
        \vo (\W,\tlang{v}) \in \Vrel{\rhlang{\tau_1}}{} \} \\
        & & \enspace \cup~
        \{(\W,\tlang{[1,v]} )
        \vo  (\W,\tlang{v})
        \in \Vrel{\rhlang{\tau_2}}{} \}\\
        \Vrel{\rhlang{\tau_1 \rightarrow \tau_2}}{} &=&
        \{(\W,\tlang{thunk~lam~x.P}) \vo \\
        & & \enspace \forall \tlang{v},\W' \sqsupset \W.~
        (\W',\tlang{v}) \in \Vrel{\rhlang{\tau_1}}{} \\ 
        & & \quad \implies (\W', \tlang{[x{\mapsto}v]P}) \in \Erel{\rhlang{\tau_2}}{}
        \}\\
        \Vrel{\rhlang{ref~\tau}}{} &=& \{(\W,\tlang{\loc})
        \vo
        \W.\heapty(\loc) = \floor{\W.k}{\Vrel{\rlang{\tau}}{}} \}\\
        \Vrel{\rllang{int}}{} &=& \{(\W,\tlang{n})\}\\
        \Vrel{\rllang{[\tau]}}{} &=& \{(\W, \tlang{[v_1,\ldots,v_n]}) \vo (\W,
        \tlang{v_i}) \in \Vrel{\rllang{\tau}}{} \}\\
        \Vrel{\rllang{\tau_1 \rightarrow \tau_2}}{} &=&
        \{(\W,\tlang{thunk~lam~x.P}) \vo \\
        & & \enspace \forall \tlang{v},\W' \sqsupset \W.~
        (\W',\tlang{v}) \in \Vrel{\rllang{\tau_1}}{} \\
        & & \quad \implies (\W', \tlang{[x{\mapsto}v]P}) \in \Erel{\rllang{\tau_2}}{}
        \}\\
        \Vrel{\rllang{ref~\tau}}{} &=&  \{(\W,\tlang{\loc})
        \vo
        \W.\heapty(\loc) = \floor{\W.k}{\Vrel{\rllang{\tau}}{}} \}\\
      \end{array}
    \]
    \[
      \begin{array}{l}
        \Erel{\tau}{} = \{(\W,P) \vo
        \forall \heap {:} \W,S \neq \tlang{Fail~\_},\heap',S',j < \W.k.~
         \\
         \qquad
         \sconf{\heap}{\sstack}{P} \steps{j} \sconf{\heap'}{\sstack'}{\cdot}\implies
        \sstack' = \tlang{Fail~c} \land \tlang{c} \in \tlang{\inlokerr}\\
        \qquad\lor~
        \exists \tlang{v},\W' \sqsupseteq W.
        ~\big(
        \sstack' = \sstack, \tlang{v} \land
        \heap' : \W'
        \land  (\W', \tlang{v}) \in \Vrel{\tau}{})\big)\}\\[0.5em]
        \sem{\rllang{\Gamma};\rhlcolor{\Gamma} \vdash \rhlang{e} : \rhlang{\tau}} \equiv
        \forall \W\,\stack_{\rllang{\Gamma}}\, \stack_{\rhlcolor{\Gamma}}\,.
        (\W,\stack_{\rllang{\Gamma}}) \in
        \Envrel{\rllang{\Gamma}}{} \land (\W,\stack_{\rhlcolor{\Gamma}}) \in
        \Envrel{\rhlcolor{\Gamma}}{}\\
        \hspace{2cm} \implies (\W,\close(\stack_{\rllang{\Gamma}},\close(\stack_{\rhlcolor{\Gamma}},\rhlang{e}^+))) \in
        \Erel{\rhlang\tau}{}\\[0.5em]
        \sem{\rhlcolor{\Gamma};\rllang{\Gamma} \vdash \rllang{e} : \rllang{\tau}} \equiv
        \forall \W\,\stack_{\rhlcolor{\Gamma}}\, \stack_{\rllang{\Gamma}}\,.
        (\W,\stack_{\rhlcolor{\Gamma}}) \in
        \Envrel{\rhlcolor{\Gamma}}{} \land (\W,\stack_{\rllang{\Gamma}}) \in
        \Envrel{\rllang{\Gamma}}{}\\
        \hspace{2cm} \implies (\W,\close(\stack_{\rhlcolor{\Gamma}},\close(\stack_{\rllang{\Gamma}},\rllang{e}^+))) \in
        \Erel{\rllang\tau}{}
  \end{array}
    \]
    \caption{Logical relation for $\langrhl$ and $\langrll$.}
    \label{fig:ref-lr}
  \end{small}
\end{figure}

\paragraph{Discussion}
In addition to directly passing across pointers, there are two alternative
conversion strategies, both of which our framework would accommodate. First, we
could create a new location and copy and convert the data. This would allow the more flexible
convertibility which does not require references to ``identical'' types, but
would not allow aliasing, which may be desirable. Second, we could
convert $\rhlang{(unit\rightarrow \tau) \times (\tau \rightarrow unit)}$
and $\rllang{(unit\rightarrow \tau) \times (\tau \rightarrow unit)}$ instead $\rhlang{ref~\tau}$ and $\rllang{ref~\tau}$ (assuming
we had pairs)---i.e., read/write proxies to the reference (similar to that used
in \cite{dimoulas12}). This allows aliasing,
i.e., both languages reading / writing to the same location, and is
sound for arbitrary convertibility relations, but comes at a runtime
cost at each read / write.

The choice to use the encoding described in this case study, or either of these
options, is not exclusive---we could provide different options for
different types in the same system, depending on the performance characteristics we need.


\section{Affine \& Unrestricted}\label{sec:affeff}

In our second case study, we consider an affine language, $\langatlc$,
interacting with an unrestricted one, $\langref$. We enforce $\langatlc$'s
at-most-once variable use dynamically in the
target using the well-known technique described, e.g., in \cite{tov10}, where
affine resources are protected behind thunks with stateful flags that raise
runtime errors the second time the thunk is forced. However, an interesting and
challenging aspect of our case study is that we only want to use dynamic
enforcement when we  lack static assurance that an affine variable will be use
at most once. 

\paragraph{Languages} We present the syntax of $\langatlc$, $\langref$, and our
untyped Scheme-like functional target $\langtarget$ in
Fig.~\ref{fig:affeff-syntax} and selected static semantics in
Fig.~\ref{fig:affeff-statics} (see supplementary material \cite{patterson22:semint-tr}). Our target $\langtarget$ is untyped, with functions, pattern
matching, mutable references, and a standard operational semantics defined via
steps $\tconf{\heap}{\tlang{e}}\steps{}\tconf{\heap'}{\tlang{e'}}$ over heap and
expression pairs. As in the previous case study, we will support 
open terms across language boundaries, and thus need to carry environments for
both languages throughout our typing judgments.

\begin{figure}[t!]
  \begin{small}
    \[
      \arraycolsep=1pt
      \begin{array}{lcl}
        \langatlc \\
        \text{Type~}\alang{\tau} & ::= & \alang{unit \vo bool \vo int \vo \tau
          {\dynlolli} \tau \vo \tau {\statlolli} \tau \vo ~!\tau}  \vo
        \alang{\tau \& \tau \vo \tau {\otimes} \tau} \\
        \text{Expr.~}\alang{e} & ::= & \alang{() \vo true \vo false \vo n \vo
          x \vo a_\modal \vo \lambda a_{\modal} : \tau . e}\\ & & \vo \alang{e
          ~ e}  \vo \alang{\lconvert}\rlang{e}\alang{\rconvert_{\tau}}
        \alang{\vo ~!v} \alang{ \vo let~ !x = e ~in~ e' \vo \langle e, e'
          \rangle}
        \\ & & \alang{\vo e.1 \vo e.2 \vo (e,e) \vo let~ (a_\stat,a_\stat') = e ~in~ e'} \\
        \text{Value~}\alang{v} & ::= & \alang{() \vo \lambda a_\modal : \tau . e \vo ~!v \vo \langle e, e'
          \rangle \vo (v,v')}\\
        \text{Mode~}\alang{\modal} & ::= & \alang{\dyn} \vo \alang{\stat} \\
        \langref\\
        \text{Type~} \rlang{\tau} & ::= & \rlang{unit} \vo  \rlang{int} \vo \rlang{\tau {\times} \tau}
        \vo \rlang{\tau {+} \tau} \vo \rlang{\tau
          {\rightarrow} \tau} \vo \rlang{\forall\alpha.\tau \vo \alpha}
        \vo \rlang{ref\,\tau}\\

        \text{Expr.~} \rlang{e} & ::= & \rlang{()} \vo \rlang{n} \vo \rlang{x} \vo \rlang{(e,e)} \vo
        \rlang{fst~e} \vo \rlang{snd~e}  \vo
        \rlang{inl~e} \vo \rlang{inr~e} \\
        & & \vo
        \rlang{match~e~x\{e\}~y\{e\}}  \vo \rlang{\lambda x : \tau.e} \vo \rlang{e~e} \vo
        \rcolor{\Lambda\alpha.\mathtt{e}} \vo \rlang{e{[}\tau{]}} \\ & & \vo
        \rlang{ref~e} \vo \rlang{!e} \vo \rlang{e := e} \vo
        \rlang{\convert{\tau}{\alang{e}}}\\
        \langtarget \\
        \text{Expr}~ \tlang{e} & ::= & \tlang{()} \vo \tlang{n}
        \vo \glang{\ell} \vo
        \tlang{x} \vo \tlang{(e,e)} \vo
        \tlang{fst~e} \vo \tlang{snd~e}  \vo
        \tlang{inl~e} \vo \tlang{inr~e} \\ & &
         \vo
        \tlang{if~e~\{e\}~\{e\}} \vo      \tlang{match~e~x\{e\}~y\{e\}}
        \vo \tlang{let~x=e~in~e} \\
        & & \vo
        \tlang{\lambda x \{e\}} \vo \tlang{e~e}  \vo
        \tlang{ref~e}  \vo \tlang{!e} \vo
        \tlang{e := e} \vo \tlang{fail~c} \\
        \text{Values} ~ \tlang{v} & ::= & \tlang{()} \vo \tlang{n} \vo
        \glang{\ell} \vo \tlang{(v,v)} \vo
        \tlang{\lambda x.e} \\
        \text{Err}~\tlang{c} & ::= & \tyerrcode \vo \converrcode \\
      \end{array}
    \]

\vspace{-0.25cm}
    \caption{Syntax for $\langref$, $\langatlc$, and $\langtarget$.}
    \label{fig:affeff-syntax}
  \end{small}
\end{figure}

To avoid unnecessary dynamic enforcement, we have two
kinds of affine function types in $\langatlc$: $\alang{\dynlolli}$ and
$\alang{\statlolli}$.\footnote{In our supplementary materials \cite{patterson22:semint-tr}, we also present a complete case study with a
simpler variant of $\langatlc$, which does not distinguish
$\alang{\dynlolli}/\alang{\statlolli}$ and thus does dynamic enforcement even on
affine variables that have no interaction with unrestricted code.}
We introduce a distinction between $\langatlc$ functions (and thus
bindings) that may be passed across the boundary (our ``dynamic'' affine arrows
$\alang{\dynlolli}$, written with a hollow circle and bind dynamic
affine variables $\alang{a_{\dyn}}$), and ones that will only ever be used
within $\langatlc$ (our ``static'' affine arrows $\alang{\statlolli}$,
 written with a solid circle and bind static affine variables $\alang{a_{\stat}}$).

We can see in Fig.~\ref{fig:affeff-statics} how $\langatlc$'s affine-variable
environment $\alang{\Affenv}$ is maintained: variables are introduced by lambda
and tensor-destructuring let, and environments are split across subterms, but all
bindings are not required to be used, as we can see, in the variable rule. (In
the full rules in supplementary material \cite{patterson22:semint-tr}, a similar pattern shows up for base
types). Since affine resources can exist within unrestricted $\langref$ terms,
our affine environments $\alang{\Affenv}$ need to be split, even in $\langref$
typing rules.

\begin{figure}
  \begin{small}
    \begin{mathpar}
      \inferrule{\alang{a_\modal} : \alang{\tau} \in \alang{\Affenv} }{\rcolor{\Delta};\rcolor{\Gamma};\alang{\Gamma};\alang{\Affenv} \vdash \alang{a_\modal} :
        \alang{\tau}}
      \and
\inferrule{\rcolor{\Delta};\rcolor{\Gamma};\alang{\Gamma};\alang{\Affenv[a_\dyn := \tau_1]} \vdash \alang{e} :
  \alang{\tau_2} \\ \text{no}_{\acolor\stat}(\alang{\Affenv})}{\rcolor{\Delta};\rcolor{\Gamma};\alang{\Gamma};\alang{\Affenv}\ \vdash \alang{\lambda a_\dyn : \tau_1. e} :
  \alang{\tau_1 \dynlolli \tau_2}}
\and
\inferrule{\rcolor{\Delta};\rcolor{\Gamma};\alang{\Gamma};\alang{\Affenv[a_\stat := \tau_1]} \vdash \alang{e} :
  \alang{\tau_2}}{\rcolor{\Delta};\rcolor{\Gamma};\alang{\Gamma};\alang{\Affenv}\ \vdash \alang{\lambda a_\stat : \tau_1. e} :
  \alang{\tau_1 \statlolli \tau_2}}
\and
\inferrule{\alang{\Affenv} = \alang{\Affenv\alang{_1} \uplus
    \alang{\Affenv}\alang{_2}} \\
  \rcolor{\Delta};\rcolor{\Gamma};\alang{\Gamma};\alang{\Affenv}\alang{_1} \vdash
  \alang{e_1} : \alang{\tau_1 \modalolli
    \tau_2} \\ \rcolor{\Delta};\rcolor{\Gamma};\alang{\Gamma};\alang{\Affenv}\alang{_2} \vdash \alang{e_2} :
  \alang{\tau_1}}{\rcolor{\Delta};\rcolor{\Gamma};\alang{\Gamma};\alang{\Affenv} \vdash \alang{e_1 ~ e_2}  : \alang{\tau_2}}
\and
\inferrule{\alang{\Affenv} = \alang{\Affenv_1} \uplus \alang{\Affenv_2}
  \\ \rcolor{\Delta};\rcolor{\Gamma};\alang{\Gamma};\alang{\Affenv}\alang{_1} \vdash \alang{e} : \alang{\tau_1 \otimes \tau_2} \\
  \rcolor{\Delta};\rcolor{\Gamma};\alang{\Gamma};\alang{\Affenv_2[a_\stat := \tau_1, a_\stat' := \tau_1]}
  \vdash \alang{e'} : \alang{\tau'}}{\rcolor{\Delta};\rcolor{\Gamma};\alang{\Gamma};\alang{\Affenv} \vdash
  \alang{let~ (a_\stat,a_\stat') = e ~in~ e'} : \alang{\tau'}}
\and
\inferrule{\alang{\Affenv}=\alang{\Affenv_e}\uplus\alang{\Affenv'} \\ \text{no}_{\acolor\stat}(\alang{\Affenv_e}) \\ \rcolor{\Delta};\rcolor{\Gamma};\alang{\Gamma};\alang{\Affenv_e} \vdash \alang{e} : \alang{\tau} \\ \_ : \alang{\tau} \conv
  \rlang{\tau}}{\alang{\Gamma};\alang{\Affenv};\rcolor{\Delta};\rcolor{\Gamma} \vdash \rlang{\convert{\tau}{\alang{e}}} :
  \rlang{\tau}}
\and
\inferrule{\alang{\Affenv} = \alang{\Affenv\alang{_1} \uplus
    \alang{\Affenv}\alang{_2}} \\
  \alang{\Gamma};\alang{\Affenv}\alang{_1};\rcolor{\Delta};\rcolor{\Gamma} \vdash
  \rlang{e_1} : \rlang{\tau_1 \rightarrow
    \tau_2} \\ \alang{\Gamma};\alang{\Affenv}\alang{_2};\rcolor{\Delta};\rcolor{\Gamma} \vdash \rlang{e_2} :
  \rlang{\tau_1}}{\rcolor{\Delta};\rcolor{\Gamma};\alang{\Gamma};\alang{\Affenv} \vdash \rlang{e_1 ~ e_2}  : \rlang{\tau_2}}
\end{mathpar}
    \caption{Selected statics for $\langatlc$ and $\langref$.}
    \label{fig:affeff-statics}
  \end{small}
\end{figure}

Note that we do not allow a dynamic function $\alang{\lambda a_\dyn : \_. e}$ to
close over static resources, as it may be duplicated if passed to $\langref$,
and thus the static resources would be unprotected.
However, we do allow a dynamic function 
to accept a static closure as argument.
This is safe because the dynamic guards will ensure that the static closure is called
at most once.
Once called, any static resources in its body will be used safely
because the static closure typechecked.

We present selections of our compilers in Fig.~\ref{fig:affeff-compiler} that
highlight the interesting cases: how we compile variables, binders, and
application. In the application cases, we can see that static variables do not
introduce the overhead that dynamic variables have (see the $\thnk$
macro at the top of the figure that errors on second invocation).

\begin{figure}
  \begin{small}
\[
  \thnk(\tlang{e}) \triangleq \tlang{let~r_{fr} = ref~1~in~}\lambda \_.\{
  \tlang{if~!r_{fr}~\{\convfail\}~\{r_{fr}:=0;e\}\}}
\]
\[
  \rlang{()} \rightsquigarrow \tlang{()}\quad
  \rlang{n} \rightsquigarrow  \tlang{n}\quad
  \rlang{\lambda x : \tau.e} \rightsquigarrow \tlang{\lambda x.\{\rlang{e}^+\}}\quad
    \alang{true}/\alang{false} \rightsquigarrow \tlang{0}/\tlang{1}
\]
\[
  \alang{a_\dyn}  \rightsquigarrow  \tlang{a_\dyn~()} \quad     \alang{a_\stat} \rightsquigarrow  \tlang{a_\stat} \quad \alang{\lambda a_{\dyn/\stat} : \tau . e} \rightsquigarrow \tlang{\lambda a_{\dyn/\stat}. \{\alang{e^\tlang{+}}\}}
\]
\[
  \begin{array}[t]{lcl}
    \alang{(e_1 : \tau_1 \dynlolli \tau_2) ~ e_2}  & \rightsquigarrow &
                                                                        \tlang{\alang{e_1^\tlang{+}}~(let~x = \alang{e_2^\tlang{+}}~in~\thnk(x))}\\
   \alang{(e_1 : \tau_1 \statlolli \tau_2) ~ e_2}  & \rightsquigarrow & \tlang{\alang{e_1}^\tlang{+}~\alang{e_2}^\tlang{+}}\\

    \alang{let~ (a_\stat,a_\stat') = e_1 ~in~ e_2} & \rightsquigarrow & \tlang{let~x_{\text{fresh}}
                                                         =
                                                         \alang{e_{1}}^\tlang{+},~} \\ & &
                                                         \tlang{a_\stat = fst~x_{\text{fresh},}} \\ & &
                                                         \tlang{a_\stat' =
                                                                                                        snd~x_{\text{fresh}}~in~\alang{e
                                                                                                        _{2}}^\tlang{+}}\\
  \end{array}
\]
    \caption{Selected cases for $\langref$ and $\langatlc$ compilers.}
    \label{fig:affeff-compiler}
  \end{small}
\end{figure}

\paragraph{Convertibility}

We define convertibility relations and conversions for $\langatlc$ and
$\langref$, highlighting selections in Fig.~\ref{fig:affeff-conv} (see
supplementary material for elided $\alang{unit} \conv \rlang{unit}$ and
$\alang{\tau_1 \otimes \tau_2} \conv \rlang{\tau_1 \times \tau_2}$). We focus on
the conversion between $\rlang{\rightarrow}$ and $\alang{\dynlolli}$ (note, of
course, that it is impossible to safely convert $\alang{\statlolli}$ to
$\langref$). Our compiler is designed to support affine code being
mixed directly with unrestricted code. 
Intuitively, an affine function should be able to behave as an unrestricted one,
but the other direction is harder to accomplish, and higher-order functions mean
both must be addressed at once. In order to account for this, we convert
$\alang{\tau_1 \multimap \tau_2}$ not to $\rlang{\tau_1 \rightarrow \tau_2}$, but rather to
$\rlang{(unit \rightarrow \tau_1) \rightarrow \tau_2}$. That is, to a $\langref$
function that expects its argument to be a thunk containing a $\rlang{\tau_1}$
rather than a $\rlang{\tau_1}$ directly. Provided that the thunk fails if
invoked more than once, we can ensure, dynamically, that a $\langref$ function
with that type behaves as an $\langatlc$ function of a related type. These
invariants are ensured by appropriate wrapping and use of the compiler macro
$\thnk(\cdot)$ (see top of Fig.~\ref{fig:affeff-compiler}).

\begin{figure}
  \begin{small}
    \begin{mathpar}
      \inferrule{
      }{C_{\rlang{int}{\mapsto}\alang{bool}},C_{\alang{bool}{\mapsto}\rlang{int}}
        : \rlang{int} \conv \alang{bool}}

      \inferrule{        C_{\alang{\tau_1}{\mapsto}\rlang{\tau_1}},C_{\rlang{\tau_1}{\mapsto}\alang{\tau_1}}
        :\alang{\tau_1} \conv \rlang{\tau_1} \\ C_{\rlang{\tau_2}{\mapsto}\alang{\tau_2}},C_{\alang{\tau_2}{\mapsto}\rlang{\tau_2}}
        :\alang{\tau_2} \conv \rlang{\tau_2}}{
        C_{\alang{\_}{\mapsto}\rlang{\_}},C_{\rlang{\_}{\mapsto}\alang{\_}}
        :\alang{\tau_1 \multimap \tau_2} \conv \rlang{
          (unit \rightarrow \tau_1) \rightarrow \tau_2}}
    \end{mathpar}

    \[
     C_{\alang{bool}{\mapsto}\rlang{int}}(\tlang{e}) \triangleq \tlang{e}
    \qquad
    C_{\rlang{int}{\mapsto}\alang{bool}}(\tlang{e}) \triangleq \tlang{if~e~0~1}
\]
\[\arraycolsep=2pt
  \begin{array}{l}
     \tlang{C}_{\alang{\tau_1 \multimap \tau_2}{\mapsto}\rlang{(unit \rightarrow \tau_1) \rightarrow
    \tau_2}}\tlang{(e)} \triangleq \\
    \quad \tlang{let~x=e~in~\lambda
       x_{thnk}.let~x_{conv} = \tlang{C}_{\rlang{\tau_1}{\mapsto}\alang{\tau_1}}(x_{thnk}~())~in}
     \\ \hspace{2.7cm} \tlang{let~x_{acc}=\thnk(x_{conv})
      ~in~\tlang{C}_{\alang{\tau_2}{\mapsto}\rlang{\tau_2}}(x~x_{acc})}\\

    \tlang{C}_{\rlang{(unit \rightarrow \tau_1) \rightarrow \tau_2}{\mapsto}\alang{\tau_1 \multimap
          \tau_2}}\tlang{(e)}
                                                     \triangleq \tlang{let~x=e~in~}\\ \quad \tlang{
          \lambda x_{thnk}. let~x_{acc}=\thnk(C_{\alang{\tau_1}\mapsto\rlang{\tau_1}}(x_{thnk}~()))~in~C_{\rlang{\tau_2}{\mapsto}\alang{\tau_2}}(\tlang{x~x_{acc}})}\\
    \end{array}
  \]
    \caption{Selected convertibility rules for $\langref$ and $\langatlc$.}
    \label{fig:affeff-conv}
  \end{small}
\end{figure}

\paragraph{Semantic Model} The most interesting part of this case study is the
logical relation because we must build a model that allows us to show that the
dynamic and static affine bindings within $\langatlc$ are used at most once. For
a dynamic binding, this is tracked in target code by the dynamic reference flag
created by the macro $\tlang{thunk}$. For a static binding, we use a similar
strategy of tracking use via a flag, but rather than a target-level dynamic runtime flag, we create a \emph{phantom} flag
that exists only within our model.
Specifically, we define an augmented target operational semantics that exists
solely for the model, and any program that runs without getting stuck under the
augmented semantics has a trivial erasure to a program that runs under the
standard semantics. This means we are using the model to identify a subset of
target programs (the erasures of well-behaved augmented programs) that do not violate source type constraints (i.e., do not use static
variables more than once), even if there is nothing in the target programs that actually
witnesses those constraints (i.e., dynamic checks or static types).

We build the model as follows. First, we extend our machine configurations to keep track
of \emph{phantom flags} $\flag$ --- i.e., in addition to a heap $\heap$ and term
$\tlang{e}$, we have a \emph{phantom flag set} $\fstore$. Second, the augmented
semantics uses one additional term, $\prtct$, which consumes one of the
aforementioned phantom flags when it reduces:

\begin{small}
\[
  \begin{array}{c}
    \text{Expressions}~ \tlang{e} ::= \ldots \tlang{\prtct(e,\flag)}\\[0.5em]
    \phconf{\fstore\uplus\{\flag\}}{\heap}{\tlang{\prtct(e,\flag)}} \phstep \phconf{\fstore}{\heap}{\tlang{e}}
\end{array}
\]
\end{small}

\noindent And finally, we modify the two rules that introduce bindings such that whenever
a binding in the syntactic category $\stat$ is introduced, we create a new 
phantom flag (where ``$\flag$ fresh'' means $\flag$ is disjoint from all flags
generated in this execution):

\begin{small}
\begin{mathpar}
  \begin{array}{l}
  \inferrule{\flag\text{ fresh}}
  {\phconf{\fstore}{\heap}{\tlang{let~a_\stat = v~in~e}} {\phstep} \phconf{\fstore\uplus\{\flag\}}{\heap}{\tlang{[a_\stat{\mapsto} \prtct(v, \flag)]e}}}\\
  \inferrule{\flag\text{ fresh}}
  {\phconf{\fstore}{\heap}{\tlang{\lambda a_\stat. e ~ v}} {\phstep}
    \phconf{\fstore\uplus\{\flag\}}{\heap}{\tlang{[a_\stat{\mapsto} \prtct(v,
        \flag)]e}}}
\end{array}
\end{mathpar}
\end{small}

\noindent Note that we write $\phstep$ for a step in this augmented semantics, to
distinguish it from the true operational step $\step$. 
While phantom flags in the augmented operational semantics play a similar role 
in protecting static affine resources as dynamic reference flags in the dynamic case,
the critical difference is that in the augmented semantics, a
$\tlang{\prtct(\cdot)}$ed resource for which there is no phantom flag will get
stuck, and thus be excluded from the logical relation by construction. This is very different from
the dynamic case, where we want --- and, in fact, need --- to include terms that can 
fail in order to mix $\langref$ and $\langatlc$ without imposing an
affine type system on $\langref$ itself.  What this means for the model is that
dynamic reference flags are a \emph{shared resource} that can be accessed from many
parts of the program and therefore tracked in the world, while phantom flags are
an \emph{unique resource} which our type system ensures is owned/used by at most 
one part of the program, which is what allows us to prove that the augmented
semantics will not get stuck.

While the full definitions are in our supplementary materials \cite{patterson22:semint-tr}, we give a high-level
description of our expression and value relations, shown in
Fig.~\ref{fig:affeff-lr}, noting that the high-level structure is similar to the
first case study.


Our expression relation, $\Erel{\tau}\rho$, is made up of tuples of worlds
$\W$ and phantom flag stores / term pairs $(\fstore_i,e_i)$, where each
flag store represents the phantom variables owned by the expression. Our worlds
$\W$ keep the step index, a standard heap typing $\heapty$ (see
\S\ref{sec:ref}), but also an affine flag store $\astore$, which maps
dynamic flags $\loc$ to \emph{either} a marker that indicates a
dynamic affine variable has been used ($0$, written $\aused$), \emph{or} the
phantom flags $\fstore$ that it closes over if it has not been used (a set that can be empty, of course). These dynamic
flags $\loc$ are a subset of the heap, disjoint from $\heapty$
(which tracks the rest of the heap, i.e., all the normal/non-dynamic-flag
references). The expression relation
then says that, given a heap that satisfies the world and arbitrary ``rest'' of
phantom flag store $\fstore_{r}$ (disjoint from that closed over by the world
and the owned portion), the term $\tlang{e}$ will either: (i) run longer than
the step index accounts for, (ii) $\tlang{\convfail}$ (error while converting a
value), or (iii) terminate at some value $\tlang{e'}$, where the flag store
$\fstore$ has been modified to $\fstore_{f}\uplus\fstore_{g}$, the heap has
changed to $\heap'$, and the new world $\W'$ is an extension of $\W$. World
extension ($\worldext_{\fstore_{r}}$)
is defined over worlds that do not contain phantom flags from $\fstore_{r}$, since phantom flags are a local resource and the world contains what is global.
It allows the step index to
decrease, the heap typing to gain (but not overwrite or remove) entries,
and the affine store to mark (but not unmark) dynamic bindings as $\aused$.

At that future world, we know
that the resulting value, along with their $\fstore_{f}$, will be
in the value relation $\Vrel{\tau}\rho$. The phantom flag store $\fstore_{g}$
is ``garbage'' that is no longer needed, and the ``rest'' is unchanged. Note
that, while running, some phantom flags may have moved into the new
world but the new world cannot have absorbed what was in the ``rest''.

Our value relation cases are now mostly standard, so we will focus only on the
interesting ones: $\alang{\dynlolli}$ and
$\alang{\statlolli}$. $\Vrel{\alang{\tau_1\dynlolli \tau_2}}\cdot$ is defined to 
take an arbitrary argument from $\Vrel{\alang{\tau_1}}\cdot$, which may own static
phantom flags in $\fstore$, and add a new location
$\loc$ that will be used in the thunk that prevents multiple uses, but
also store the phantom flags in the affine store. The idea is that a function
$\alang{\lambda a_\dyn:\_.e}$ can be applied to an expression that closes over
static phantom flags, like $\alang{let~(b_\stat,c_\stat) = (1,2)~in~\lambda
  a_\stat.b_\stat}$---the latter will have phantom flags for both
$\alang{b_\stat}$ and $\alang{c_\stat}$. The body is then run with the
argument substituted with a guarded expression. Now, consider what happens when
the variable is used: the $\grd(\cdot)$ wrapper will update the location to
$\aused$, which means that in the world, the phantom flags that were put at that
location are no longer there --- i.e., they are no longer returned by $\rf(\W')$,
which returns all phantom flags closed over by dynamic flags.  That means, for
the reduction to be well-formed, the phantom flags have to move somewhere
else---either back to being owned by the term (in $\fstore_{f}$) or in the
discarded ``garbage'' $\fstore_{g}$.  Once the phantom flag set has been moved
back out of the world, the flags can again be used by $\prtct(\cdot)$
expressions.

The static function, $\Vrel{\alang{\tau_1 \statlolli \tau_2}}{\cdot}$, has a
similar flavor, but it may itself own static phantom flags. That means that the
phantom flag set for the argument must be disjoint, and when we run the
body, we combine the set along with a fresh phantom flags
$\flag$ for the argument, which are then put inside the
$\prtct(\cdot)$ expressions.

\begin{figure}
  \begin{small}
\[
  \begin{array}{l@{~~}c@{~~}l}
    \multicolumn{3}{l}{\grd(\tlang{e}, \ell) \triangleq \lambda \_.\{
  \tlang{if~!\ell~\{\convfail\}~\{\ell:=\aused;e\}\}}}\\[0.5em]



  \Vrel{\rlang{\tau_1 \rightarrow \tau_2}}{\rho} &=&
    \{(\W, (\emptyset, \tlang{\lambda x.\{e\}})) \vo
    \forall \tlang{v}~ \W'.\\ & & \W \worldextstrict_{\emptyset} \W'~
    \land~ (\W', (\emptyset, \tlang{v})) \in
    \Vrel{\rlang{\tau_1}}{\rho} \\
    & & \implies (\W', (\emptyset, [x {\mapsto} \tlang{v}]\tlang{e})) \in \Erel{\rlang{\tau_2}}{\rho}
    \}\\





  \Vrel{\alang{\tau_1 \dynlolli \tau_2}}{\cdot} &=&
  \{(\W, (\emptyset, \tlang{\lambda~x \{e\}})) \vo
 \forall \fstore~\tlang{v}~ \W'. \\ & & \quad \W \worldextstrict_{\emptyset} \W' \land
  (\W', (\fstore, \tlang{v})) \in \Vrel{\alang{\tau_1}}{\cdot} \\ & &
  \implies ((\W'.k, \W'.\heapty, \W'.\astore \uplus \ell \mapsto
  \fstore),\\ & & \hspace{0.85cm} (\emptyset, [x
  {\mapsto}\grd(\tlang{v},\ell)]\tlang{e})) \in \Erel{\alang{\tau_2}}{\cdot}
    \}\\

  \Vrel{\alang{\tau_1 \statlolli \tau_2}}{\cdot} &=&
    \{(\W, (\fstore, \tlang{\lambda~a_\stat. \{e\}})) \vo
    \\ & &  \forall \fstore'~\flag_1~\tlang{v}~ \W'. \W \worldextstrict_{\fstore} \W' \land~(\W', (\fstore', \tlang{v}))
    \in \Vrel{\alang{\tau_1}}{\cdot} \\
    & & \quad \land \fstore\cap \fstore'=\emptyset \land~\flag\notin \fstore\uplus \fstore'\uplus \rf(\W')\\ & &
    \implies (\W', (\fstore\uplus \fstore'\uplus \{\flag\}, [\tlang{a_\stat} {\mapsto} \tlang{\prtct(v,\flag)}]\tlang{e})) \\ & & \hspace{1.4cm}\in \Erel{\alang{\tau_2}}{\cdot}
    \}\\



\end{array}
\]
\[
  \begin{array}{l}
    \Erel{\tau}{\rho} = \{(\W, (\fstore, \tlang{e})) \vo
    \fv(\tlang{e}) = \emptyset ~\land\\
    \qquad \forall \fstore_{r}, \heap {:} \W,~ \tlang{e'},~ \heap',~ j <
    \W.k.~ \fstore_{r}\#\fstore \land \fstore_{r}\uplus \fstore : \W\land \\
    \qquad \phconf{\fstore_{r}\uplus \rf(\W)\uplus \fstore}{\heap}{\tlang{e}} \phsteps{j} \phconf{\fstore'}{\heap'}{\tlang{e'}} \nrightarrow
    \\ \qquad\implies
    \tlang{\tlang{e'}} = \tlang{\convfail}
    \lor
    (\exists \fstore_{f}~\fstore_{g}~ \W'. \\
    \qquad\qquad \fstore'=\fstore_{r}\uplus \rf(\W')\uplus \fstore_{f}\uplus \fstore_{g} \\
    \qquad \qquad \land~\W \worldext_{\fstore_{r}} \W'
    \land~\heap' : \W' ~ \land~(\W', (\fstore_{f}, \tlang{e'})) \in \Vrel{\tau}{\rho})\}\\
  \end{array}
\]
     \[
      \begin{array}{l} (k,\heapty,\astore) \worldext_{\fstore}
(j,\heapty',\astore') \triangleq (j,\heapty',\astore')\in World_j ~\land~ \\
\qquad j \le k \land~\fstore \# \rf(k, \heapty,
\astore)~\land~\fstore \# \rf(j, \heapty', \astore')\\ \qquad \land~
\forall \loc \in \dom{\heapty}. \floor{j}{\heapty(\loc)} =
\heapty'(\loc) ~\land~ \\ \qquad \forall \loc \in
\dom{\astore}.(\loc) \in \dom{\astore'} \land\\ \qquad
\quad(\astore(\loc) = \aused \implies \astore'(\loc) = \aused)
\\ \qquad \quad ~\land~ (\astore(\loc) = \fstore \implies
\astore'(\loc) = (\aused \lor \fstore))
      \end{array}
    \]

\caption{Selections of $\langref$ \& $\langatlc$ Logical Relation.}
    \label{fig:affeff-lr}
  \end{small}
\end{figure}

With the logical relation in hand, we can prove analogous theorems to
Lemma~\ref{lemma:ref-conv-soundness} (Convertibility Soundness),
Theorem~\ref{theorem:ref-fundamental} (Fundamental Property),
Theorem~\ref{theorem:ref-safety-ll} (Type Safety for Lang A), and
Theorem~\ref{theorem:ref-safety-hl} (Type Safety for Lang B).

Note that to prove our type safety theorems, 
we prove a lemma which states that, if $\tconf{\heap}{\tlang{e}}\steps{*}\tconf{\heap'}{\tlang{e'}}\nrightarrow$, then for any $\fstore$, $\phconf{\fstore}{\heap}{\tlang{e}}\phsteps{*}\phconf{\fstore_1'}{\heap_1'}{\tlang{e_1'}}\nrightarrow$. This lemma is necessary because the given assumption of the type safety theorem is that the configuration $\tconf{\heap}{\tlang{e}}$ steps under the normal operational semantics, but to apply the expression relation, we need that a corresponding configuration steps to an irreducible configuration under the phantom operational semantics.

Although our phantom flag realizability model was largely motivated by
efficiency concerns with the dynamic enforcement of affinity, more broadly, it
demonstrates how one can build complex static reasoning into the model even if
such reasoning is absent from the target. Indeed, the actual target language,
which source programs are compiled to and run in, has not changed; the
augmentations exist \emph{only in the model}. In this way, the preservation of
source invariants is subtle: it is not that the types actually exist in the
target (via runtime invariants or actual target types), but rather that the
operational behavior of the target is exactly what the type interpretations
characterize.



\section{Memory Management \& Polymorphism}\label{sec:polylin}

For our third case study, we consider how $\langsysf$, whose references are
garbage collected, can interoperate with core $\langlt$, a language with safe
strong updates despite memory aliasing, supported via linear
capabilities~\cite{ahmed07:L3}. This case study primarily highlights how
different memory management strategies can interoperate safely, in particular,
that manually managed linear references can be converted to garbage-collected
references without copying. This is of particular interest as more
low-level code is written in Rust, a language with an ownership discipline on
memory that similarly could allow safe transfer of memory to garbage-collected
languages.

We also use this case study to explore how
polymorphism/generics in one language can be used, via a form of
interoperability, from the other. This is interesting because significant effort has gone into
adding generics to languages that did not originally support them, in order to
more easily build certain re-usable libraries.\footnote{e.g., Java 1.5/5, C\#
  2.0 \cite{kennedy01} and more recently, in the Go programming language} While
we are not claiming that interoperability could entirely replace built-in
polymorphism, sound support for cross-language type instantiation and
polymorphic libraries presents a possible alternative, especially for smaller,
perhaps more special-purpose, languages. This would allow us to write something
like:

\begin{small}
\[
  \flang{map \convert{\foreign{\llang{int}} \rightarrow \foreign{\llang{int}}}{\llang{(\lambda x:int. x+1)}} \convert{list~\foreign{\llang{int}}}{\llang{[1,2,3]}}}
\]
\end{small}

\noindent where the $\flang{blue}$ language supports polymorphism, and has a generic
$\flang{map}$ function, while the $\llang{pink}$ language does not. Of course,
since convertibility is still driving this, in addition to using a concrete
$\llang{intlist}$, $\llang{[1,2,3]}$, as above, the language without
polymorphism could convert entirely different (non-list) concrete
representations into similar polymorphic ones --- i.e., implementing a sort of
polymorphic interface at the boundary. For example, rather than an
$\llang{intlist}$ (or a $\llang{stringlist}$), in the example 
above, one could start with an $\llang{intarray}$ or $\llang{intbtree}$, or
any number of other traversable data structures that could be converted to
$\flang{list~int}$ (or any $\flang{list~\alpha}$).

\paragraph{Languages} We present the syntax of $\langlt$, augmented with forms
for interoperability, in Fig.~\ref{fig:polylin-syntax}. $\langlt$ has linear
capability types $\llang{\Capa{\zeta}{\tau}}$ (capability for abstract location
$\llang{\zeta}$ storing data of type $\llang{\tau}$), unrestricted pointer types
$\llang{\Ptr{\zeta}}$ to support aliasing, and location abstraction
($\llang{\Lambda} \llang{\zeta . e} : \llang{\forall \zeta . \tau}$ and
$\llang{\pack{\zeta,~v}} : \llang{\exists \zeta . \tau}$). The key insight to
$\langlt$ is that the pointer can be separated from the capability and passed
around in the program separately. At runtime, the capabilities will be erased,
but the static discipline only allows pointers to be used with their
capabilities (tied together with the type variables $\llang{\zeta}$), and only
allows capabilities to be used linearly. This enables safe in-place updates and
low-level manual memory management while still supporting some flexibility in
terms of pointer manipulation. We refer the reader to our supplementary
materials \cite{patterson22:semint-tr}, or the original paper on $\langlt$ (\cite{ahmed07:L3}) for more
details on its precise static semantics, but present highlights here. In
particular, $\llang{new}$ allocates memory and returns an existential package
containing a capability and pointer
($\llang{\exists \zeta. \Capa{\zeta}{\tau} \otimes \Ptr{\zeta}}$).
$\llang{swap}$ takes a matching capability ($\llang{\Capa{\zeta}{\tau_{1}}}$) and
pointer $\llang{\Ptr{\zeta}}$ and a value (of a possibly different type
$\llang{\tau_{2}}$) and replaces what is stored, returning the capability and
old value $\llang{\Capa{\zeta}{\tau_{2}} \otimes \tau_{1}}$. Note that since
capabilities record the type of what is in the heap and are unique,
strong updates are safe. Finally, $\llang{free}$ takes a package of a capability
and pointer ($\llang{\exists \zeta. \Capa{\zeta}{\tau} \otimes \Ptr{\zeta}}$)
and frees the memory, consuming both in the process and returning what was stored there---any lingering pointers are
harmless, as the necessary capability is now gone.

We compile both $\langlt$ and $\langsysf$ to an extension of the Scheme-like
target $\langtarget{}$ that we used in the previous case study (see
Fig.~\ref{fig:polylin-compilers} for $\langlt$; $\langsysf$ is standard). Our
additions to $\langtarget{}$, shown in Fig.~\ref{fig:polylin-target}, add manual memory allocation
($\tlang{alloc}$), $\tlang{free}$ (which will error on a garbage-collected
location), an instruction ($\gcmov$) to convert a manually managed location to
garbage collected, and an instruction ($\callgc$) to explicitly invoke the
garbage collector. The last allows the compiler to decide where the GC can
intercede (before allocation, in our compiler), and in doing so simplifies our
model slightly. The memory management itself is captured in our heap definition,
which allows the same location names to be used as either GC'd ($\gcmaps$) or
manually managed ($\manmaps$), and re-used after garbage collection or manual
free. Dereference ($\tlang{!e}$) and assignment 
($\tlang{e := e}$) work on both types of reference (failing, of course, if it is
manually managed and has been freed). This strategy of explicitly invoking the
garbage collector and using a single pool of locations retains significant
challenging aspects about garbage collectors while remaining simple enough to
expose the interesting aspects of interoperation.

As in the previous case study, we have boundary terms,
$\flang{\convert{\tau}{\llang{e}}}$ and $\llang{\convert{\tau}{\flang{e}}}$, for
\emph{converting} a term and using it in the other language. Now, we also add new types $\flang{\foreign{\llang{\tau}}}$, pronounced
``foreign type'', and allow conversions from $\llang{\tau}$ to
$\flang{\foreign{\llang{\tau}}}$ for \emph{opaquely embedding}\footnote{ Similar
  to ``lumps'' in Matthews-Findler\cite{matthews07}, though they give a
  \emph{single} lump type for all foreign types, i.e., they would have only
  $\flang{\foreign{}}$, rather than $\flang{\foreign{\llang{\tau}}}$. } types
for use in polymorphic functions.

If a language supports polymorphism, then its type abstractions should be
agnostic to the types that instantiate them, allowing them to range over not
only host types, but indeed any foreign types as well.  
Doing so should not violate parametricity. However, the non-polymorphic language
may need to make restrictions on how this power can be used, so as to not 
allow the polymorphic language to violate its invariants. To make this challenge
material, our non-polymorphic language in this case study has linear 
resources (heap capabilities) that cannot, if we are to maintain
soundness, be duplicated. This means, in particular, that whatever
interoperability strategy we come up with cannot allow a linear capability 
from $\langlt$ to flow over to a $\langsysf$ function that duplicates it, even
if such function is well-typed (and parametric) in $\langsysf$.
\begin{figure}
  \begin{small}
    \[
      \begin{array}{lcl}
        \langlt \\
        \text{Type~} \llang{\tau} & ::= & \llang{unit}
        \vo \llang{bool}
        \vo \llang{\tau \otimes \tau}
        \vo \llang{\tau \dynlolli \tau}
        \vo \llang{!\tau}
        \\ & &
        \vo \llang{\Ptr{\zeta}}
        \vo \llang{\Capa{\zeta}{\tau}}
        \vo \llang{\forall \zeta . \tau}
        \vo \llang{\exists \zeta . \tau} \\
        \text{Value~} \llang{v} &::=&
        \llang{\lambda x : \tau . e}
        \vo \llang{()}
        \vo \llang{\mathbb{B}}
        \vo \llang{(v, v)}
        \vo \llang{!v}
        \vo \llang{\Lambda} \llang{\zeta . e}
        \vo \llang{\pack{\zeta,~v}}
        \\
        \text{Expr.~} \llang{e} & ::= & \llang{v}
        \vo \llang{x}
        \vo \llang{(e,~e)}
        \vo \llang{e~e}
        \vo \llang{let~()~=~e~in~e}
        \vo \llang{if~e~e~e} \\ & &
        \vo \llang{let~(x,~x)~=~e~in~e}
        \vo \llang{let~!x~=~e~in~e}
        \vo \llang{dupl~e}  \\ & &
        \vo \llang{drop~e}
        \vo \llang{new~e} \vo \llang{free~e}
        \vo \llang{swap~e~e~e}
        \vo \llang{e~[\zeta]}  \\ & &
        \vo \llang{\pack{\zeta,~e}}
        \vo \llang{let~\pack{\zeta,~x}~=~e~in~e}
        \vo \llang{\convert{\tau}{\flang{e}}}
        \vo \llang{\foreign{\flang{e}}_{\tau}} \\

      \foreignset &= & \{\llang{unit}, \llang{bool}, \llang{\Ptr{\zeta}}, \llang{!\tau}\}
      \end{array}
    \]
    \vspace{-0.25cm}
    \caption{Syntax for $\langlt$.}
    \label{fig:polylin-syntax}
  \end{small}
\end{figure}

\begin{figure}
  \begin{small}
    \[
      \begin{array}{lcl}
        \text{Expr}~ \tlang{e} & ::= & \ldots \vo \tlang{alloc~e} \vo \tlang{free~e} \vo \tlang{\gcmov~e} \vo \callgc \\
        \text{Heap}~{\heap} & ::= & \loc \manmaps v, \heap \vo \loc \gcmaps v,
        \heap \vo \cdot \\
        \text{Err Code}~\tlang{c} & ::= & \ldots \vo \ptrerrcode \\
      \end{array}
    \]
    \vspace{-0.25cm}
    \caption{Additions to $\langtarget$ (see Fig.~\ref{fig:affeff-syntax} for base $\langtarget$).}
    \label{fig:polylin-target}
  \end{small}
\end{figure}

\begin{figure}
  \begin{small}
    \[
      \llang{x} \rightsquigarrow \tlang{x} \hspace{0.2cm} \llang{()} \rightsquigarrow \tlang{()} \hspace{0.2cm} \llang{true/false} \rightsquigarrow \tlang{0/1} \hspace{0.2cm}  \llang{!v} \rightsquigarrow \llang{v}^+ \hspace{0.2cm} \llang{\lambda x : \tau . e} \rightsquigarrow \tlang{\lambda x.\llang{e}^+}
    \]
    \[
      \begin{array}[t]{lcl}
  \llang{e_1 e_2} &\rightsquigarrow& \llang{e_1}^+ \llang{e_2}^+ \\
  \llang{if~e_1~e_2~e_3} &\rightsquigarrow& \tlang{if~\llang{e_1}^+~\llang{e_2}^+~\llang{e_3}^+} \\
  \llang{(e_1,~e_2)} &\rightsquigarrow& \tlang{(\llang{e_1}^+,~\llang{e_2}^+)} \\
  \llang{dupl~e} &\rightsquigarrow& \tlang{let~x = \llang{e}^+~in~(x,~x)} \\
  \llang{drop~e} &\rightsquigarrow& \tlang{let~\_ = \llang{e}^+~in~()} \\
  \llang{new~e} &\rightsquigarrow& \tlang{let~\_ = callgc~in~let~x_\ell = alloc~\llang{e}^+}\\ & &\tlang{in~((),~x_\ell)} \\
        \llang{free~e} &\rightsquigarrow& \tlang{let~x = \llang{e}^+~in~let~x_r=~!(snd~x)~in}\\
         & & \tlang{\letc{\_}{free~(snd~x)}{x_r}} \\
  \llang{swap~e_c~e_p~e_v} &\rightsquigarrow&
\tlang{let~x_p = \llang{e_p}^+~in~
      let~\_ = \llang{e_c}~in~
        let~x_{v} = ~!x_p} \\ & & \tlang{in~
          \letc{\_}{(x_p := \llang{e_v}+)}{
                                  ((), x_{v})}} \\
  \llang{\Lambda}\llang{\zeta.e} &\rightsquigarrow& \tlang{\lambda \_.\llang{e}^+} \\
  \llang{e~[\zeta]} &\rightsquigarrow& \tlang{\llang{e}^+~()} \\
  \llang{\pack{\zeta,~e}} &\rightsquigarrow& \llang{e}^+ \\
  \llang{\convert{\tau}{\flang{e}}} &\rightsquigarrow& \tlang{C_{\flang{\tau} \mapsto \llang{\tau}}(\flang{e}^+)}
      \end{array}
    \]
    \[
    \begin{array}[t]{lcl}
\llang{let~()~=e_1~in~e_2} &\rightsquigarrow& \tlang{let~\_=\llang{e_1}^+~in~\llang{e_2}^+} \\
  \llang{let~(x_1,~x_2)~=e_1~in~e_2} &\rightsquigarrow& \tlang{let~p=\llang{e_1}^+~in~let~x_1=fst~p~in~}\\ & & \tlang{let~x_2=snd~p~in~\llang{e_2}^+} \\
  \llang{let~!x = e_1~in~e_2} &\rightsquigarrow& \tlang{let~x = \llang{e_1}^+~in~\llang{e_2}^+} \\
      \llang{\letc{\pack{\zeta,~x}}{e_1}{e_2}} &\rightsquigarrow&
    \tlang{let~x = \llang{e_1}^+~in~
          \llang{e_2}^+}\\
    \end{array}
    \]
    \vspace{-0.25cm}
    \caption{Compiler for $\langlt$.}
    \label{fig:polylin-compilers}
  \end{small}
\end{figure}

\paragraph{Convertibility}
The first conversion that we want to highlight is between references. In
$\langlt$, pointers have capabilities that convey ownership, and thus to convert
a pointer we also need the corresponding capability. For brevity, we may use $\llang{REF~\tau}$
to abbreviate a capability$+$pointer package type.

\begin{small}
\begin{mathpar}
  \inferrule{
    \tlang{\tlang{C_{\llang{\tau} \mapsto \flang{\tau}}}},
    \tlang{\tlang{C_{\flang{\tau} \mapsto \llang{\tau}}}}
    :
    \flang{\tau}
    \conv
    \llang{\tau}
  }{
    \tlang{\tlang{C_{\llang{REF~\tau} \mapsto \flang{ref~\tau}}}},
    \tlang{\tlang{C_{\flang{ref~\tau} \mapsto \llang{REF~\tau}}}}
    :
    \flang{ref~\tau}
    \conv
    \llang{\exists \zeta . \Capa{\zeta}{\tau}~\otimes~!\Ptr{\zeta}}
  }

  \begin{array}{l}
    \tlang{C_{\llang{REF~\tau}
        \mapsto \flang{ref~\tau}}}(\tlang{e})
    \triangleq
    \tlang{let~x=snd~e~in}\\
    \hspace{2.63cm}\tlang{let~\_=(x := C_{\llang{\tau} \mapsto
    \flang{\tau}}(!x))~in~\gcmov~x}\\
    \tlang{C_{\flang{ref~\tau} \mapsto \llang{REF~\tau}}}(\tlang{e})
    \triangleq
    \tlang{\letc{x}{alloc~C_{\flang{\tau} \mapsto
      \llang{\tau}}(!e)}{((),x)}}
    \\
  \end{array}
\end{mathpar}
\end{small}

The glue code itself is quite interesting: going from $\langlt$ to $\langsysf$,
since the $\langlt$ type system guarantees that this is the only capability to
this pointer, we can safely directly convert the pointer into a $\langsysf$
pointer with $\gcmov$ after in-place replacing the contents with the result of
converting (a less general rule that had a different premise might not need to
convert, e.g., if the data was already compatible---see the first case study for
more details). Going the other direction, from $\langsysf$ to $\langlt$, there
is no way for us to know if there are other aliases to the reference, so we
can't re-use the pointer. While we could simply disallow this conversion, and
error if it were attempted, instead we copy and convert data into a freshly
allocated manually managed location (note how, in the target, capabilities are erased
to unit). In this case, as in many, there are multiple sound ways of converting,
and it may be that a particular one makes more sense for your use case: we took
the position that it was useful to get a copy of the data, unaliased, but
perhaps a language designer would rather force the pointer to be dereferenced on the
$\langsysf$ side and the underlying data converted.

We account for interoperability of polymorphism in two parts. First, we have a
\emph{foreign type}, $\flang{\foreign{\llang{\tau}}}$, which embeds an $\langlt$
type into the type grammar of $\langsysf$. This foreign type, like any
$\langsysf$ type, can be used to instantiate type abstractions, define
functions, etc, but $\langsysf$ has no introduction or elimination rules for it---terms of foreign type must come across from, and then be sent back to, $\langlt$.
These come by way of the conversion rule $\flang{\foreign{\llang{\tau}}} \conv \llang{\tau}$,
which allow terms of the form $\flang{\convert{\foreign{\llang{\tau}}}{\llang{e}}}$
(to bring an $\langlt$ term to $\langsysf$) and
$\llang{\convert{\tau}{\flang{e}}}$ (the reverse). Moreover,
the conversion rule for foreign types restricts $\llang{\tau}$ to a safe
$\foreignset$ subset of types, but has no runtime consequences:

\begin{small}
\[
  \begin{array}{l}\inferrule{\llang{\tau}\in {\normalfont \foreignset}}{
    \tlang{C_{\flang{\foreign{\llang{\tau}}} \mapsto \llang{\tau}}},
    \tlang{C_{\llang{\tau} \mapsto \flang{\foreign{\llang{\tau}}}}}
    :
    \flang{\foreign{\llang{\tau}}}
    \conv
    \llang{\tau}
  }
\end{array}
\begin{array}{l}
  \tlang{C_{\flang{\foreign{\llang{\tau}}}{\mapsto}\llang{\tau}}(e) \triangleq e}\\
  \tlang{C_{\llang{\tau}{\mapsto}\flang{\foreign{\llang{\tau}}}}(e) \triangleq e}
\end{array}
\]
\end{small}

To prove soundness we need to show that $\foreignset$ types are indeed safe to embed.
The soundness condition depends on the 
expressive power of the two languages when viewed through the lens of
polymorphism. In our case, since the non-polymorphic language is linear but the polymorphic one is not, we need to show that
a $\foreignset$ type can be copied (i.e., none of its values own linear
capabilities)---this includes $\llang{unit}$ and $\llang{bool}$, but
also $\llang{\Ptr{\zeta}}$ and any type of the form $\llang{!\tau}$. Now, consider examples using this:

\begin{small}
  \begin{align}
  \fcolor{(\Lambda} \flang{\alpha.\lambda x{:}\alpha. \lambda y{:}\alpha.
      y)[\foreign{\llang{bool}}]~\flang{\convert{\foreign{\llang{bool}}}{\llang{true}}} ~\flang{\convert{\foreign{\llang{bool}}}{\llang{false}}}} \label{ex:polylin1} \\
  \flang{(\lambda x:BOOL. x) \convert{BOOL}{\llang{true}}} \text{~where~}    \flang{BOOL \triangleq \forall \alpha.\alpha \rightarrow \alpha \rightarrow
    \alpha} \label{ex:polylin2}
  \end{align}
\end{small}

In (\ref{ex:polylin1}), the leftmost expression is a polymorphic $\langsysf$
function that returns the second of its two arguments. It is instantiated it
with a foreign type, $\flang{\foreign{\llang{bool}}}$. Next, two terms of type
$\llang{bool}$ in $\langlt{}$ are embedded via the foreign conversion,
$\flang{\convert{\foreign{\llang{bool}}}{\llang{\cdot}}}$, which requires that
$\llang{bool} \in \foreignset$. Not only does this mechanism allow $\langlt{}$
programmers to use polymorphic functions, but also $\langsysf{}$ programmers to
use new base types. Of course, we could also convert the actual values, as in
(\ref{ex:polylin2}). To do so, we can define conversions between Church booleans
in $\langsysf$ (which has no booleans) and ordinary booleans in $\langlt$:

\begin{small}
\[\begin{array}{l}
    \inferrule{ }{\flang{\forall \alpha.\alpha \rightarrow \alpha \rightarrow
        \alpha} \conv \llang{bool}}
    \end{array}
\begin{array}{l}
    \tlang{C_{\flang{BOOL}{\mapsto}\llang{bool}}(e)} \triangleq \tlang{e~()~0~1}\\
    \tlang{C_{\llang{bool}{\mapsto}\flang{BOOL}}(e)} \triangleq \tlang{if0~e~\{}\fcolor{\Lambda} \flang{\alpha. \lambda x{:}\alpha. \lambda
  y{:}\alpha.x}\}~\\
  \hspace{3cm}\{\fcolor{\Lambda} \flang{\alpha. \lambda x{:}\alpha. \lambda
        y{:}\alpha.y}\}\\
  \end{array}
\]
\end{small}

\paragraph{Semantic Model}
In Fig.~\ref{fig:polylin-lr}, we present parts of the logical
relation that we use to prove our conversions and entire languages sound (see supplementary material \cite{patterson22:semint-tr}).

Our model is inspired by that of core $\langlt$ \cite{ahmed07:L3}, though ours
is significantly more complex to account for garbage collection
and interoperation with $\langsysf$. The key is a careful distinction between
owned (linear) manual memory, which is \emph{local} and described by heap
fragments associated with terms, and garbage-collected memory, which is
\emph{global} and described by the world $\W$. Since memory can be freed (via
garbage collection or manual $\tlang{free}$), reused, and moved from manual
memory to garbage-collected memory,
there are several constraints on how heap fragments and worlds may evolve so
we can ensure safe memory usage.

With that in mind, our value interpretation of source types $\Vrel{\tau}\rho$
are sets of worlds and related heap-fragments-and-values
$(\heap,\tlang{v})$, where the heap fragment
$\heap$ paired with value $\tlang{v}$ is the portion of the manually managed heap that $\tlang{v}$
\emph{owns}.

The relational substitution $\rho$ maps type variables $\flang{\alpha}$ to
arbitrary type interpretations $R$ and location variables $\llang{\zeta}$ to
concrete locations $\llang{\loc}$. Since $\langsysf$ cannot own
manual (linear) memory, all cases of $\Vrel{\flang{\tau}}\rho$ have empty
$\emptyset$ heap fragments. However, during evaluation, memory could be
allocated and subsequently freed so the expression relation does not have that
restriction. In $\langlt$, pointer types $\llang{ptr~\zeta}$ do not own
locations, so they can be freely copied. Rather, linear capabilities
$\llang{cap~\zeta~\tau}$ convey ownership of the location
$\tlang{\loc}$ that $\llang{\zeta}$ maps to and the heap
fragment $\tlang{H}$ pointed to by $\tlang{\loc}$.

In the expression relation $\Erel{\tau}{\rho}$, we run the expression with a set
of pinned locations ($\locsett$) that the garbage collector should not touch
(which may come from an outer context if we are evaluating a subterm), 
a garbage-collected heap fragment that satisfies the world ($\heap_{g+}$),
an arbitrary disjoint manually allocated ($MHeap$) ``rest'' of the heap
($\heap_{r}$), composed with the owned fragment ($\heap$). Then, assuming
$\tlang{e}$ terminates at $\tlang{v}$, we expect the ``rest'' heap is unchanged, the
garbage-collected portion has been transformed to $\heap_{g}'$, the owned
portion has been transformed into $\heap'$, and that
$(\W',(\heap',\tlang{v})) \in \Vrel{\llang{\tau}}\rho$,
where $\W'$ is a world the transformed GC'd portion of the heap $\heap_{g}'$ must satisfy.

Critical to the relation is world extension, written
$\worldext_{\locsets, \locbij}$, which indicates how our logical worlds can
evolve over time. In typical logical relations for state, the heap grows
monotonically and no location is ever overwritten, which world extension captures.
But, in our setting, the future heap might have deallocated, overwritten,
re-used memory (and re-used it between the GC and manual allocation). We can't
just allow arbitrary future states, however, as the semantics of types
do dictate restrictions on what has to happen in the heap. In particular, there
are two sets of locations that we need to keep careful track of: the rest can
change freely. The first are manually managed locations that we can't disturb, which index $\locsets$ captures.
Those are generally just the owned locations of term that we are currently
running. The second are the garbage collected locations that we must preserve in
the heap, at the same type (but we can change the value of), captured by $\locbij$. We also
have a syntactic shorthand, denoted by $\worldexte$, that
is indexed by the heap $\heap$ and the expressions $\tlang{e}$. This syntactic
shorthand is defined so that $\locsets$ takes its manually managed locations from the
domain of $\heap$ while $\locbij$ takes its garbage collected locations as the
locations in the original world that are present in either some value in the
heap $\heap$ or the expression $\tlang{e}$. Finally, we often use $\rchgclocs$ in
order to compute $\locbij$ when using world extension. $\rchgclocs(\W, S)$ is the set
of locations in the world $\W$ that are actually mentioned in the set $S$; i.e., $\rchgclocs(\W, S) = \dom{\W} \cap S$.

While our target supports dynamic failure (in the form of the $\tlang{fail}$
term), our logical relation rules out that possibility, ensuring that there are
no errors from the source nor from the conversion. This is, of course, a choice
we made, which may be stronger than desired for some languages (and, indeed, for
our previous two case studies), but given our choice of conversions, it is possible.

With the logical relation in hand, we prove analogous theorems to
Lemma~\ref{lemma:ref-conv-soundness} (Convertibility Soundness),
Theorem~\ref{theorem:ref-fundamental} (Fundamental Property),
Theorem~\ref{theorem:ref-safety-ll} (Type Safety for Lang A), and
Theorem~\ref{theorem:ref-safety-hl} (Type Safety for Lang B).



Our convertibility soundness result proves that our conversions above between
garbage-collected and manual references, as well as $\langlt$ booleans
and $\langsysf$ Church booleans (described above) are sound. We also show that
$\flang{\tau_1 \rightarrow \tau_2} \conv \llang{!(!\tau_1 \multimap \tau_2)}$
assuming $\flang{\tau_1} \conv \llang{\tau_1}$ and $\flang{\tau_2} \conv
\llang{\tau_2}$.






\begin{figure}
  \begin{small}
    \[\arraycolsep=1pt
      \begin{array}[t]{lcl}
          \Vrel{\flang{\alpha}}{\rho} &=& \atf{\rho}(\flang{\alpha}) \\
  \Vrel{\flang{unit}}{\rho} &=& \{\left(\W, (\emptyset, \tlang{()})\right)\} \\
  \Vrel{\flang{\tau_1 \rightarrow \tau_2}}{\rho} &=&
    \{(\W, (\emptyset, \tlang{\lambda x . e})) \vo \forall \W', \tlang{v}.~ \W\worldexte_{\emptyset, \tlang{e}} \W' \land \\ & & \hspace{-1.25cm}(\W', (\emptyset, \tlang{v})) \in \Vrel{\flang{\tau_1}}{\rho} \implies
      (\W', (\emptyset, \tlang{[x \mapsto \tlang{v}]e})) \in \Erel{\flang{\tau_2}}{\rho}
    \} \\
  \Vrel{\flang{\forall \alpha . \tau}}{\rho} &=&
    \{(\W, (\emptyset, \tlang{\lambda\_.e}),) \vo \forall R \in RelT, \W'.  \\ & & \hspace{-1.25cm}
        \W\worldextstricte_{\emptyset, \tlang{e}} \W'\implies (\W', (\emptyset, \tlang{e})) \in \Erel{\flang{\tau}}{\rho[\flang{F}(\flang{\alpha}) \mapsto R]} \} \\
  \Vrel{\flang{ref~\tau}}{\rho} &=&
    \{(\W, (\emptyset, \tlang{\ell})) \vo \W.\heapty(\loc) = \floor{\W.k}{\Vrel{\rlang{\tau}}{\rho}} \} \\
  \Vrel{\flang{\foreign{\llang{\tau}}}}{\rho} &=& \Vrel{\llang{\tau}}{\rho} \\
  \Vrel{\llang{unit}}{\rho} &=& \{\left(\W, (\emptyset, \tlang{()}) \right)\} \\
  \Vrel{\llang{bool}}{\rho} &=&
    \left\{(\W, (\emptyset, \tlang{b})) \vo \tlang{b} \in \left\{\tlang{0}, \tlang{1}\right\}\right\} \\
  \Vrel{\llang{\tau_1~\otimes~\tau_2}}{\rho} &=&
    \{(\W, (\heap_{1} \uplus \heap_{2}, \tlang{(\tlang{v_{1}}},~\tlang{v_{2}}))) \vo \\
      & & \hspace{-0.5cm}(\W, (\heap_{1}, \tlang{v_{1}})) \in \Vrel{\llang{\tau_1}}{\rho} \land (\W, (\heap_{2}, \tlang{v_{2}})) \in \Vrel{\llang{\tau_2}}{\rho}\} \\
  \Vrel{\llang{\tau_1 \dynlolli \tau_2}}{\rho} &=&
    \{(\W, (\heap, \tlang{\lambda x . e})) \vo
      \forall \W', \heap_{\tlang{v}}, \tlang{v}. \\
      & & \hspace{-1.5cm} \W\worldexte_{\heap, \tlang{e}} \W' \land(\W', (\heap_{\tlang{v}}, \tlang{v})) \in \Vrel{\llang{\tau_1}}{\rho} \implies\\
      & & \hspace{-2cm}  (\W', (\heap \uplus \heap_{\tlang{v}}, \tlang{[x \mapsto \tlang{v}]e})) \in \Erel{\llang{\tau_2}}{\rho}\} \\
  \Vrel{\llang{!\tau}}{\rho} &=&
    \{(\W, (\emptyset, \tlang{v})) \vo (\W, (\emptyset, \tlang{v})) \in \Vrel{\llang{\tau}}{\rho} \} \\
  \Vrel{\llang{\Ptr{\zeta}}}{\rho} &=&
    \{(\W, (\emptyset, \tlang{\ell})) \vo
      \atl{\rho}(\llang{\zeta}) = \tlang{\ell} \} \\
  \Vrel{\llang{\Capa{\zeta}{\tau}}}{\rho} &=&
    \{(\W, (\heap \uplus \{\tlang{\ell} \mapsto \tlang{v}\}, ())) \vo \\ & & \hspace{-.5cm}
      \atl{\rho}(\llang{\zeta}) = \tlang{\ell} \land (\W, (\heap, \tlang{v})) \in \Vrel{\llang{\tau}}{\rho} \} \\
  \Vrel{\llang{\forall \zeta . \tau}}{\rho} &=&
    \{(\W, (\heap, \tlang{\lambda \_ . e})) \vo \\ & & \quad \forall \loc.\, (\W, (\heap, \tlang{e})) \in \Erel{\llang{\tau}}{\rho[\llang{L3}(\llang{\zeta}) \mapsto \tlang{\ell}]} \} \\
  \Vrel{\llang{\exists \zeta . \tau}}{\rho} &=&
                                                \{(\W, (\heap, \tlang{ \tlang{v}})) \vo
            \exists \loc.\, (\W, (\heap, \tlang{v})) \in \Vrel{\llang{\tau}}{\rho[\llang{L3}(\llang{\zeta}) \mapsto \tlang{\ell}]} \}
    \end{array}
    \]
    \[\arraycolsep=1pt
      \begin{array}{l}
  \Erel{\tau}{\rho} =
    \{ (\W, (\heap, \tlang{e})) \vo
      \forall \locsett, \tlang{v}, \heap_{g+} : \W, \mpart{\heap_{r}}, \heap_{*} . \\ \quad
        (\heap_{g+}\uplus \heap \uplus \heap_{r}, \tlang{e})
        \multistep_{\locsett}
        (\heap_{*}, \tlang{v}) \nrightarrow_{\locsett} \\ \quad
        \implies \exists \heap', \heap_{g}'. \exists \W'.
          \heap_{*} = \heap_{g}' \uplus \heap'\uplus \heap_{r}  ~\land \heap_{g}' : \W' ~\land \\
          \W \worldext_{(\dom{\heap_{r}}), \rchgclocs(\W, \locsett\cup \fl(\cod(\heap_{r})))} \W' \\ \qquad
          \land ~(\W', (\heap', \tlang{v})) \in \Vrel{\tau}{\rho}
          \fcolor{~\land~ \heap_{1'}=\emptyset}\}
\end{array}
\]

\[
(k,\heapty) \worldext_{\locsets, \locbij} (j,\heapty') = \begin{array}[t]{l}
  j \le k
  \land~ \locsets\# \dom{\heapty'} \\
  \land~\forall \loc \in \locbij. \heapty'(\loc) = \floor{j}{\heapty(\loc)}
\end{array}
\]

Note the $\fcolor{\text{highlighted parts}}$ only apply to $\langsysf$ types.
    \caption{Logical Relation for $\langsysf$ and $\langlt$.}
    \label{fig:polylin-lr}
  \end{small}
\end{figure}

\paragraph{Discussion} While we showed how to handle universal types, handling
existential types is another question. With our existing ``foreign type''
mechanism, we can support defining data structures and operations over them and
passing both. For example, we could pass an expression of type $\foreign{int}
\times \foreign{int} \rightarrow \foreign{int} \times \foreign{int} \rightarrow
int$, for a counter defined as an integer. That provides some degree of
abstraction, but doesn't, for example, disallow passing the $\foreign{int}$ back
to some other code that expects that type. We could, however, in the language
with existential types, pack that to $\exists \alpha. \alpha \times \alpha
\rightarrow \alpha \times \alpha \rightarrow int$.

More interesting is the question when both languages have polymorphism. In that
case, if we wanted to convert abstract types, we would need to generalize our
convertibility rules to handle open types, i.e.,
$\Delta \vdash \tau \conv \tau'$. If the interpretation of type variables were
the same in both languages (i.e., in our model this would mean that both were
drawn from the same relation), this would be sufficient. If, however, the
interpretation of type variables were different in the two languages (we do this
in the case study in \S\ref{sec:affeff}, see our supplemental materials \cite{patterson22:semint-tr} for the use of $\UnrTyp$ in $\Vrel{\rlang{\forall \alpha . \tau}}\rho$), we
would need, in our source type systems, some form of bounded polymorphism in
order to restrict the judgment to variables that were equivalent. Otherwise, it
would be impossible to prove convertibility rules sound.


\section{Related Work and Conclusion}\label{sec:related}

Most research on interoperability has focused either on reducing
boilerplate or improving performance. We will not discuss those, focusing on work
addressing soundness.

\para{Multi-language semantics.} \citet{matthews07} studied the question of the
interoperability of source languages, developing the idea of a syntactic
multi-language with \emph{boundary terms} (c.f., contracts
\cite{findler2002contracts,findler2006contracts}) that mediate between the two
languages. They focused on a static language interacting with a dynamic one, but
similar techniques have been applied widely (e.g., object-oriented
\cite{gray2005fine,gray2008safe}, 
affine and unrestricted \cite{tov10}, simple and dependently typed
\cite{osera12}, functional language and assembly \cite{patterson17}, linear and
unrestricted \cite{scherer18}) and used to prove compiler properties (e.g.,
correctness \cite{perconti14}, full abstraction \cite{ahmed11,new16}). More
recently, there has been an effort understand this construction from a
denotational \cite{buro2019multi} and categorical \cite{buro2020equational}
perspective. While the last may seem particularly relevant to our work, they
still firmly root the multi-language as a source-language construct, rather than
building it out of a common substrate, our key divergence from this prior work.

\citet{barrett16} take a slightly different path, directly mixing languages (PHP and Python) and allowing bindings from one to be used in the
other, though to similar ends.

\para{Interoperability via typed targets.} Shao and Trifonov
\cite{shao98,trifonov99} studied interoperability much earlier, and closer to
our context: they consider interoperability mediated by translation to a common
target. They tackle the problem that one language has access to control effects
and the other does not. Their approach, however, is different: it relies upon a
target language with an effect-based type system that is sufficient to capture
the safety invariants, whereas while our realizability approach can certainly
benefit from typed target languages, it doesn't rely upon them. While typed
intermediate languages obviously offer real benefits, there are also unaddressed
problems, foremost of which is designing a usable type system that is
sufficiently general to allow (efficient) compilation from all the languages you
want to support. While there are ongoing attempts (probably foremost is the
TruffleVM project \cite{grimmer15}) to design such general intermediates, most
have focused their attention on untyped or unsound languages, and in the
particular case of TruffleVM, there is as-yet no meta-theory.

\para{An abstract framework for unsafe FFIs.} \citet{turcotte19} advocate a
framework using an abstract version of the foreign language, so
soundness can be proved without building a full multi-language. They demonstrate
this by proving a modified type safety proof of Lua and C interacting via the C
FFI, modeling the C as code that can do arbitrary unsound behavior and thus
blamed for all unsoundness. While this approach seems promising in the
context of unsound languages, it is less clear
how it applies to sound languages.



\para{Semantic Models and Realizability Models} The use of semantic models to
prove type soundness has a long history~\cite{milner78}.  We make use of
step-indexed models~\cite{appel01,ahmed04:phd}, developed as part of the
Foundational Proof-Carrying Code~\cite{ahmed10:fpcc-jrnl} project, which showed
how to scale the semantic approach to complex features found in
real languages such as recursive types and higher-order mutable state.  While
much of the recent work that uses step-indexed models is concerned with program
equivalence, one recent project that focuses on type soundness is 
RustBelt~\cite{jung18:rustbelt}: they give a semantic model of $\lambda_{Rust}$
types and use it to prove the soundness of $\lambda_{Rust}$ typing rules, but
also to prove that the $\lambda_{Rust}$ implementation of standard library
features (essentially unsafe code) are semantically sound inhabitants of their
ascribed type specification. 

Unlike the above, our realizability model interprets source types as sets of
target terms. Our work takes inspiration from a line of work by Benton and
collaborators on ``low-level semantics for high-level types'' (dubbed
``realistic realizability'')~\cite{benton06:new}. Such models were used to prove
type soundness of standalone languages, specifically, \citet{benton07:ppdp}
proved an imperative while language sound and \citet{benton09:tldi} proved type
soundness for a simply typed functional language, both times interpreting source
types as relations on terms of an idealized assembly and allowing for compiled
code to be linked with a verified memory allocation module implemented in
assembly~\cite{benton06:new}. \citet{krishnaswami15} make use of a realizability
model to prove consistency of $\mathrm{LNL}_{D}$ a core type theory that
integrates linearity and full type dependency. The linear parts of their model,
like our interpretation of $\langlt$ types, are directly inspired by the
semantic model for $\langlt$ by \citet{ahmed07:L3}. While they consider
interoperability and use realizability models, their approach is quite different
from ours, as their introduce both term constructors and types ($G$
and $F$) that allow direct embedding into the other language, thereby
changing it, rather than defining conversions into existing types (which,
indeed, is probably impossible in their case). More generally, such
realizability models have also been used by \citet{jensen13} to verify low-level
code using a high-level separation logic, and by \citet{Benton2009} to verify
compiler correctness.

Finally, New et al.~\cite{newahmed18,new2019,new20:gradparam} make use of
realizability models in their work on 
semantic foundations of gradual typing, work that we have drawn inspiration from,
given gradual typing is a special instance of language
interoperability. They compile type casts in a surface gradual language to a
target Call-By-Push-Value \cite{levy01:phd} language without casts, build a realizability model of
gradual types and type precision as relations on target terms, and prove
properties about the gradual surface language using the model.

\para{Verification-based Approaches} Much work has been done using high-level
program logics to reason about target terms, which can be seen as analogous to
the realizability approach. Perhaps most relevant, in the context of
interoperability, is the Cito system of \citet{wang14}, where code to-be-linked
is given a specification over the behavior of target code, and compilation can
then proceed relying upon that specification. This clearly renders benefits in
terms of language independence, since any compiled code that satisfied that
specification could be used. However, there is a significant difference from our
work: by incorporating the semantics of types of both languages we can prove
that the \emph{conversions} preserve those semantics, and thus allow an end user
to gain the benefits of type soundness without having to do any verification.
Indeed, proving the conversions sound (or, in the case that they can be no-ops,
proving that is okay) is the central result of this paper, and such conversions
are not a part of the setup of \citet{wang14}.


\paragraph{Conclusion and Future Work}\label{sec:conclusion}

We have presented a novel framework for the design and
verification of sound language interoperability where that interoperability
happens, as in practical systems, after compilation. 
The realizability models at the heart of our technique give us powerful
reasoning tools, including the ability to encode static invariants that are
otherwise impossible to express in often untyped or low-level target languages.
Even when it is possible to turn static source-level invariants into dynamic
target-level checks, the ability to instead move these invariants into the model
allows for more performant (and perhaps, realistic) compilers without losing the
ability to prove soundness.

In the future, we hope to apply the framework to further explorations
of the interoperability design space, e.g., to investigate interactions between
lazy and strict languages (compilation to Call-By-Push-Value \cite{levy01:phd}
may illuminate conversions), between single-threaded and concurrent languages  
(session types \cite{honda1993types,takeuchi1994interaction,honda1998language} 
may help guide interoperability with process calculi like the $\pi$-calculus
\cite{milner1992calculus}), between different control effects, and between
Rust and a GC'ed language such as ML, Java, or Haskell compiled to a low-level target. 


\begin{acks}                            
  We thank the anonymous reviewers for their in-depth comments.
  This material is based upon work supported by the \grantsponsor{GS100000001}{National Science Foundation}{} under Grant No.~\grantnum{GS100000001}{CCF-1816837} and \grantnum{GS100000001}{CCF-1453796}.
\end{acks}

\clearpage

\bibliography{dbp}



\end{document}